\documentclass[acmtog,nonacm]{acmart}

\usepackage{epigraph}
\usepackage{nth}
\usepackage{algorithm}
\usepackage[noend]{algpseudocode}

\usepackage{booktabs}
\usepackage{graphicx}
\usepackage{soul}
\usepackage{microtype}
\usepackage{xspace}
\usepackage{natbib}
\usepackage{multirow}
\usepackage{url}
\usepackage[nolist]{acronym}
\usepackage{algorithm}
\usepackage{algpseudocode}
\usepackage{xcolor}
\usepackage{siunitx}
\usepackage{IEEEtrantools} 
\usepackage[makeroom]{cancel}

\newtheoremstyle{mytheoremstyle}%
  {}              %
  {}              %
  {}                        %
  {0pt}                  %
  {\bfseries\sffamily}   %
  {.}                    %
  {5pt plus 1pt minus 1pt} %
  {\thmname{#1}\thmnumber{ #2}%
   \thmnote{ \normalfont\rmfamily(#3)}}                     %
\theoremstyle{mytheoremstyle}
\newtheorem{theorem}{Theorem}

\newtheorem*{remark*}{Remark}   %

\makeatletter
\renewenvironment{proof}[1][\proofname]{\par
  \pushQED{\qed}%
  \normalfont \topsep6\p@\@plus6\p@\relax
  \trivlist
  \item[\hskip\labelsep
        \bfseries\sffamily          %
    #1\@addpunct{.}]\ignorespaces
}{%
  \popQED\endtrivlist\@endpefalse
}
\makeatother

\definecolor{classicColor}{HTML}{000000}
\definecolor{correctedColor}{HTML}{FF7744}

\usepackage{pifont}

\newcommand{\change}[1]{#1}

\DeclareGraphicsExtensions{.png,.jpg,.pdf,.ai,.psd}
\def\figurePath{figures/}

\keywords{noise generation, Monte Carlo method, video diffusion models, non-photorealistic rendering, texture synthesis}

\newcommand{\eg}{e.g.,\xspace}
\newcommand{\ie}{i.e.,\xspace}

\newcommand{\refFig}[1]{Fig.~\ref{fig:#1}}
\newcommand{\refTab}[1]{Tab.~\ref{tab:#1}}
\newcommand{\refSec}[1]{Sec.~\ref{sec:#1}}
\newcommand{\refEq}[1]{Eq.~\ref{eq:#1}}
\newcommand{\refAlg}[1]{Alg.~\ref{alg:#1}}
\newcommand{\refThm}[1]{Thm.~\ref{thm:#1}}

\usepackage{icomma}

\newcommand{\myfigure}[2]{%
    \begin{figure}[htb]%
    \centering\includegraphics*[width = \linewidth]{\figurePath#1}%
    \caption{#2}%
    \label{fig:#1}%
    \end{figure}%
}

\newcommand{\mycfigure}[2]{%
    \begin{figure*}[htb]%
    \centering\includegraphics*[width = \linewidth]{\figurePath#1}%
    \caption{#2}%
    \label{fig:#1}%
    \end{figure*}%
}

\soulregister\ref7
\soulregister\cite7
\soulregister\refFig7
\soulregister\refAlg7
\soulregister\cite7
\soulregister\ref7
\soulregister\pageref7
\soulregister\shortcite7
\soulregister\eg0
\soulregister\ie0
\soulregister\etal0

\DeclareGraphicsExtensions{.png,.jpg,.pdf,.ai,.psd}
\newcommand{\mysection}[2]{\section{#1}\label{sec:#2}}
\newcommand{\mysubsection}[2]{\subsection{#1}\label{sec:#2}}

\definecolor{colorA}{HTML}{4285f4}
\definecolor{colorB}{HTML}{ea4335}
\definecolor{colorC}{HTML}{fbbc04}
\definecolor{colorD}{HTML}{34a853}
\definecolor{colorE}{HTML}{ff6d01}
\definecolor{colorF}{HTML}{46bdc6}
\definecolor{colorG}{HTML}{000000}
\definecolor{colorH}{HTML}{777777}
\definecolor{colorI}{HTML}{bdd6ff}
\definecolor{colorJ}{HTML}{6a9e6f}

\begin{acronym}
\acro{BRDF}{bi-directional reflectance distribution function}
\acro{RDS}{random-dot stereogram}
\acro{MC}{Monte Carlo}
\acro{QMC}{Quasi-Monte Carlo}
\acro{NPR}{non-photorealistic rendering}
\acro{MSE}{mean squared error}
\end{acronym}

\newcommand{\refSecSuppl}[2]{Suppl.~Sec.~#2}

\usepackage{bm}

\newcommand{\mymath}[2]{
    \newcommand{#1}{\TextOrMath{$#2$\xspace}{#2}}
}

\begin{document}

\title{Tabula Rasa: Monte Carlo estimation of unit-variance noise with controlled spatio-temporal correlation}

\author{Tobias Ritschel}
\affiliation{%
	\institution{University College London}
	\country{UK}
}
\affiliation{%
	\institution{Meta Reality Labs}
	\country{USA}
}
\email{t.ritschel@ucl.ac.uk}

\author{Yang Zhou}
\affiliation{%
	\institution{Meta Reality Labs}
	\country{USA}
}
\email{zhouyang@meta.com}

\author{Nick Milef}
\affiliation{%
	\institution{Meta Reality Labs}
	\country{USA}
}
\email{nickmilef@meta.com}

\author{Mikhail Dereviannykh}
\affiliation{%
	\institution{Karlsruhe Institute of Technology}
	\country{Germany}
}
\affiliation{%
	\institution{Meta Reality Labs}
	\country{USA}
}
\email{devmishok@gmail.com}

\author{Chen Liu}
\affiliation{%
	\institution{University College London}
	\country{UK}
}
\email{chen.liu.21@ucl.ac.uk}

\author{Christophe Hery}
\affiliation{%
	\institution{Meta Reality Labs}
	\country{USA}
}
\email{christ.hery@gmail.com}

\author{Carl Marshall}
\affiliation{%
	\institution{Meta Reality Labs}
	\country{USA}
}
\email{csmarshall@meta.com}

\begin{abstract}
We suggest a method to generate time-varying Gaussian noise with controlled variance and controlled temporal correlation.
This noise is used in several downstream tasks for temporal control and temporal coherence.
The core technical idea is to phrase this problem as joint Monte-Carlo estimation of both a classic pixel reconstruction and estimation of variance using the concept of ``sketching'' from the database literature.
We demonstrate that our method allows temporal control for downstream tasks with simpler and faster code than previous methods.
\end{abstract}

\maketitle

\mymath{\pixelValue}{P}
\mymath{\correctedPixelValue}{\pixelValue_\mathrm{Cor}}
\mymath{\otherImageCoord}{\mathbf y}
\mymath{\pixelDomain}{\Omega}
\mymath{\pixelSubdomains}{\pixelDomain_i}
\mymath{\seed}{\xi}
\mymath{\var}{\mathrm{Var}}
\mymath{\world}{f}
\mymath{\cardinality}n
\mymath{\sampleCount}m
\mymath{\subIndex}i
\mymath{\sampleIndex}j
\mymath{\area}{a_\subIndex}
\mymath{\worldConstant}{F_\subIndex}
\mymath{\estimate}{\widehat}
\mymath{\sample}{\omega_\sampleIndex}
\mymath{\hashing}{\mathcal{H}}
\mymath{\bigHashing}{\hashing_\mathrm b}
\mymath{\smallHashing}{\hashing_\mathrm s}
\mymath{\lattice}{\mathcal L}
\mymath{\noise}{\mathcal N}
\mymath{\bin}b
\mymath{\binCount}o
\mymath{\prob}p

\section{Introduction}
We propose a method\footnote{\url{https://github.com/facebookresearch/Tabula-Rasa}} to generate image sequences of Gaussian noise that have unit variance and spatio-temporal correlation induced by light transport.
If, for example, one pixel is reflecting the same world space position which is visible in another pixel, the two pixels have to be the same.
If a world space position is visible in one frame, occluded in the next one and becomes visible in the third frame again in a different position, the value in the first and last frame has to be the same for the pixel it falls on.
Enforcing such correlation, however, will affect its variance, and the challenge is to combine the aims of correlation and variance control.

Time-correlated unit-variance noise has an important use case in diffusion-based image video synthesis and reconstruction 
\cite{burgert2025gowiththeflow,chang2024how}, but also can be used for video stylization, or animated texture synthesis.

\begin{figure}[htbp]
    \includegraphics[width=\columnwidth]{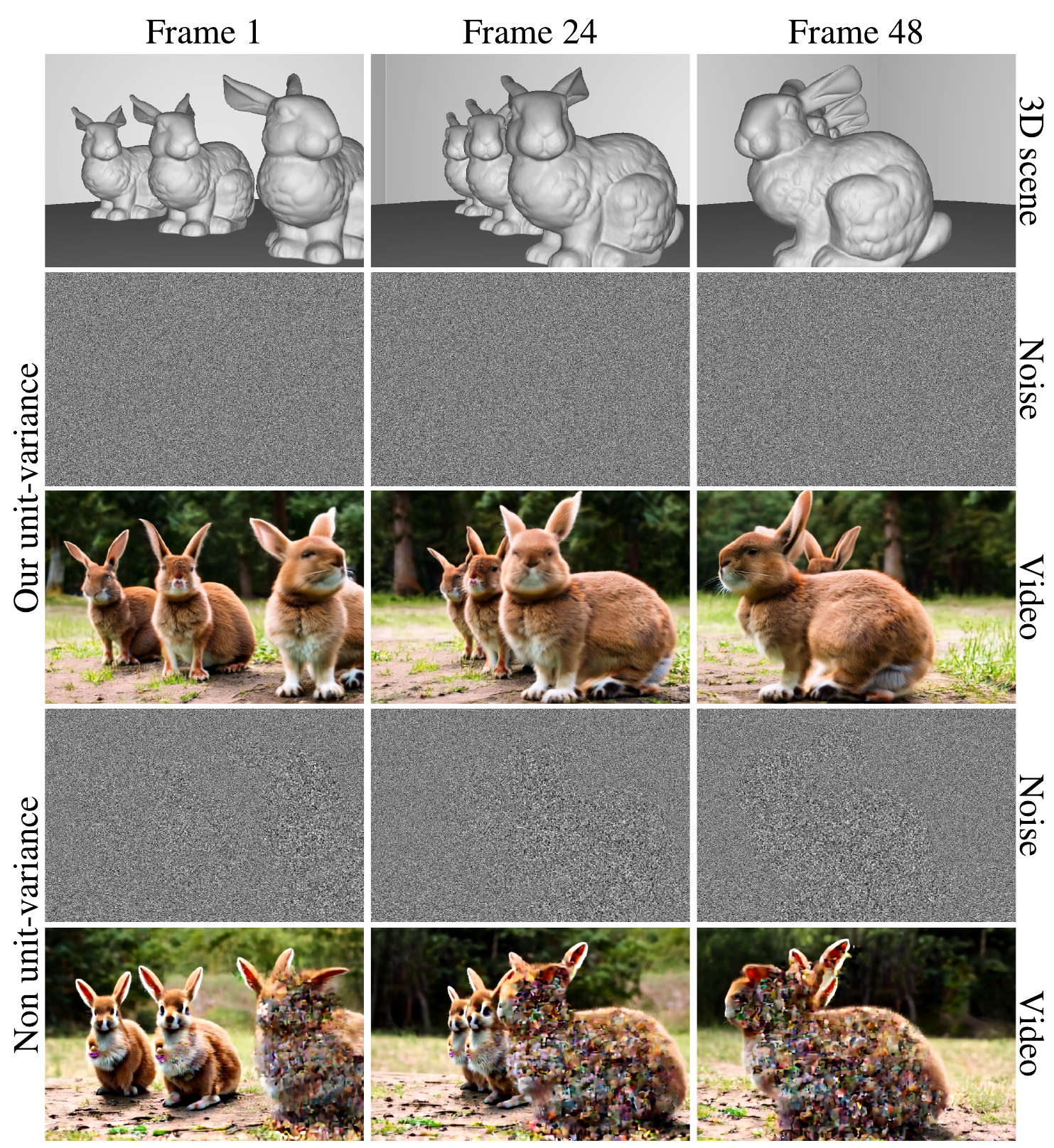}
    \caption{A sample video generated by our method driven by geometry of bunnies.
    Maintaining temporal correlation of noise patterns is critical for controlling video generation, but maintaining unit variance of the noise is necessary to produce high-quality video. 
    Generating a $1024^2$ frame takes \qty{2.44}{\milli\second} on an Nvidia A100 and adds three lines of code to an off-the-shelf Monte Carlo path tracer.}
    \label{fig:Teaser}
\end{figure}

Previous attempts to produce such noise rely on combinatorial enumeration of all variates by means of rasterization \cite{chang2024how}, Brownian bridges \cite{deng2025infiniteresolution} or maintaining a minification or magnification state \cite{burgert2025gowiththeflow}.
These have complex implementations, can be slow to compute as they operate sequentially: to produce frame $t+1$ one needs to know frame $t$.
They also rely on 2D optical flow, which will struggle with clipping and occlusion, while we produce noise that is coherent ``in 3D'' by construction.

Our solution is based on \ac{MC}-based pixel reconstruction, where the integration is in the image domain \cite{smith1995pixel} and the integrand is noise \cite[Sec. 7.8]{pharr2023physically}.
Doing so right away, unfortunately, would not produce unit variance.
This article is about augmenting \ac{MC}-based pixel reconstruction with means to preserve said variance.
The main idea relies on means of counting contribution of variates to an estimate, inspired by database concepts, such as linear counting \cite{whang1990linear} and hyper-log-log \cite{flajolet1985probabilistic}.

\section{Previous work}
Patterns of spatially correlated noise were studied by \citet{julesz1971foundations} since the 1950s.
A particular example are \acp{RDS}, where the correlation over space is such as it would be caused by binocular disparity and hence eliciting a sensation of depth under appropriate ocular vergence.
Where the naked eye sees only noise, under a different condition (ocular vergence), a strikingly strong perception emerges.
Our method is the time analogy of an \ac{RDS}, where a static image does not reveal any structure, in the right condition (motion) the pattern is apparent to man and machine.

A second inspiration is the question what a pixel really is \cite{smith1995pixel} and a typical answer is to consider the integration of a reconstruction filter and a signal \cite{pharr2023physically}.
Assuming a reconstruction filter fixed, 
\eg a 2D box, we would ask what the 3D light field signal needs to be such that we can efficiently sample from it and it retains the desired variance.

In general, preservation of image statistics is used in advection visualization \cite{neyret2003advected}, to suppress blending artifacts in texture synthesis \cite{heitz2018high}, and in recent pyramidal image generation \cite{wang2024generative}.
A final use case is \ac{NPR}  where methods by \citet{benard2011state} or \citet{kass2011coherent} retain the 3D temporal coherence, but at the expense of  statistics that change under magnification or minification.

\citet{ge2023preserve} found that the noise produced by inverting an image is not the same as the noise found when inverting a video but correlated over time instead.
Hence, feeding correlated noise into a synthesis should encourage correlated results \cite{ceylan2023pix2video}.
The question then is how to produce such noise?

A principled answer is given by \citet{chang2024how}, who derive a sophisticated image warp.
To produce correlated frames,  noise in the first frame is moved along arbitrary flow by rasterizing triangles to subpixels.
This preserves unit variance, but at the cost of substantial code complexity and compute time.

\change{Another} attempt to produce noise by warping is due to \citet{burgert2025gowiththeflow}.
They expand and collapse random pixels in a controlled fashion.
When zooming in, a random variable is split in two, when zooming out, two variables are combined into one, all with the correct handling of variance.
This approach is limited to 3D flows, and while its per-frame cost is similar to ours, it is sequential, \ie frame $t$ needs to be computed before frame $t+1$.

In summary, our method achieves state-of-the-art speed while leveraging existing theory of \ac{MC} image synthesis, leading to more 3D-coherent noise and simpler code than approaches using 2D flows. Importantly, we support random access, allowing parallelization over time (on four A100s, our method is simply four times faster).

\section{Our approach}
In our notion, a single pixel at a single video frame is a random variable that can be reconstructed as 
\begin{equation}
\pixelValue(\seed) = 
\int_\pixelDomain
\world(\otherImageCoord)(\seed)
\mathrm d\otherImageCoord
,
\end{equation}
where
\seed is the random seed and \otherImageCoord is an integration variable over the pixel domain \pixelDomain.
Changing the seed will produce new realizations of a random world variable \world, the projection of an arbitrary complex 3D noise structure to 2D (\refFig{Footprint}).

This is identical to Monte Carlo reconstruction path-traced pixels \cite{pharr2023physically}, just that the integral is a random variable.
It can be imagined as ray-tracing different procedural worlds with different random seed \seed and using the resulting statistics.

\subsection{Monte Carlo solution without controlled variance}
A realization of this random variable can be estimated by fixing \seed and solving the integral using Monte Carlo techniques:
\begin{equation}
\label{eq:nonUnitEstimator}
\estimate\pixelValue(\seed) = 
\sum_{j=1}^\sampleCount
\frac
{\world(\sample)(\seed)}
{\prob(\sample)}
,
\end{equation}
where \sample is a sample of the pixel footprint, \prob the probability of that sample and \sampleCount is the number of samples.
In the limit, this value converges to the true random variable with more samples.
All sophistication of Monte Carlo rendering can be applied here \eg using \ac{QMC} sequences such as Hammersley.

While the estimator in \refEq{nonUnitEstimator} is unbiased, the variance of the random variable is not 1, because multiple variables that are unit-variance are combined with different proportions.
We could remedy this situation if we knew the variance, simply by re-scaling
\begin{equation}
\estimate\correctedPixelValue
= \frac{\estimate\pixelValue}{\sqrt{\estimate{\var(\pixelValue)}}}
.
\end{equation}
Hence, the problem becomes the problem of estimating the variance.
Note, that this is not about the variance of the estimator $\var(\estimate\pixelValue)$, but estimation of the variance $\estimate{\var(\pixelValue)}$ (ideally, also with low variance of the variance estimator, $\var(\estimate{\var(\pixelValue)})$).

We provide more background to the notion of variance and estimation in \refSecSuppl{Primer}{1}.

\subsection{Monte Carlo solution with controlled variance}
To control variance, we will make two assumptions: First, that there are only finitely many random variables inside the integration domain.
This means, each random variable is constant in respect to spatial coordinate over a part of the pixel domain \pixelDomain, a piecewise-constant random field.
Second, that we can classify each discrete 2D pixels sample position to one of these discrete random variables.

\myfigure{Footprint}{
Footprint of a single output pixel (blue) in the random field for 2D (left) and 3D (right).
For 3D, two surfaces fall onto one pixel, resulting in a discontinuous (pink) 3D correlation a 2D flow cannot express.
}

\subsubsection{Sub-pixel decomposition}
The first assumption  is visualized in \refFig{Footprint} that shows how the reconstructed filter value is a sum of \cardinality discrete sub-integrals
\begin{equation}
\pixelValue(\seed)
= 
\sum_{\subIndex=1}^\cardinality
\int_{\pixelDomain_\subIndex}
\world_\subIndex(\otherImageCoord)(\seed)
\mathrm d
\otherImageCoord
.
\end{equation}

As the function is constant within a subdomain, this means the result is the product of that constant function value \worldConstant and the area of the subdomain \area:
\begin{equation}
\pixelValue(\seed)
= 
\sum_{\subIndex=1}^\cardinality
\area
\worldConstant(\seed)
.
\end{equation}

We do not assume anything about the cardinality \cardinality, and do also not need to know it for our approach to work.
We also do not know the shape of the subdomains $\pixelSubdomains$.
It would depend on \world.
In particular, if \world operates in 3D, it might involve effects like occlusion and perspective or even reflection and refraction,  which will result in a complex partition of the pixel domain \pixelDomain into arbitrarily shaped subdomains \pixelSubdomains.

\subsubsection{Variance}
At any rate, three properties hold:
First, the values of \worldConstant are uncorrelated.
This is just a property of the underlying random field.
Second, $\sum\area=1$, so the sum of all parts of a pixel is the one of the total pixel, without loss of generality, normalized to 1.
Third, $\area\geq0$, area is positive.
Hence, the estimate is a convex combination of independent random variables \worldConstant with weights \area and the variance of the combined random variable is
\begin{equation} \label{eq:varianceEstimator}
\var(\pixelValue) 
=
\sum_{\subIndex=1}^\cardinality
\area^2
\underbrace{
\var(\worldConstant)
}
_{=1}
=
\sum_{\subIndex=1}^\cardinality
\area^2
.
\end{equation}

So we are left with the problem of estimating the area \area of all \cardinality subdomains, enabling a unit-variance estimator
\begin{equation}
\label{eq:unitVarianceEstimator}
\estimate\correctedPixelValue = 
\frac{\estimate\pixelValue}{\sqrt{\sum_{\subIndex=1}^\cardinality \estimate{\area}^2}}.
\end{equation}
The key innovation of this article is to perform this estimation efficiently in practice.

\subsubsection{Estimating variance}
If we only had a low and known number of known subdomains, one could simply estimate both the value and the area using MC in isolation, apply the variance correction equation \`a la  $(\sum a_i^2)^{-1/2}$ and move on.
Unfortunately, we cannot know the subdomain shapes; we do not even know their number.

The key to make estimation of variance still work in this condition is based on \emph{sketching}, a term from the database literature \cite{flajolet1985probabilistic,whang1990linear} to describe a compact representation of a more complex statistic.
More concretely, our sketch is a combination of two levels of \emph{hashing} with a \emph{histogram}.

We estimate the corrected value $\estimate\correctedPixelValue$ using an extended form of Monte Carlo, which jointly computes both the integral $\estimate\pixelValue$ for a fixed seed \seed and the areas $\estimate{\area}$ which are then combined according to \refEq{unitVarianceEstimator}.

First, we pick a sample \sample and compute the integrand for that sample, $\world(\sample)$.
We represent the integrand in the form of \begin{equation}
\world(\seed)(\sample)
= 
\noise(\bigHashing(\lattice(\sample)) \oplus \seed),
\end{equation}
where 
\noise, 
\bigHashing and
\lattice are the \emph{noise}, \emph{hashing}, and \emph{lattice} functions, respectively \change{and $\oplus$ is an operator to combine hash seeds}.
This triplet of functions is just what a by-the-book \cite{ebert2002texturing,lagae2006long} procedural noise shader would do, but we need to dissect it to query the right information as explained next:

\paragraph{Lattice}
The lattice function \lattice  uses a high-resolution ($1024^3$) lattice, to convert the continuous coordinate to a discrete one.
This lattice does not need to be limited to 2D.
For correlation induced by ray-tracing, the lattice function would return the 3D hit position in a 3D lattice.
We can support any form of ray-tracing, as long as it returns one unique hit point inside this lattice: primary rays, perfect reflection or refraction, but no distribution effects or bounces.
The lattice is infinite and never represented in memory in any form.

\paragraph{Hashing}
The hash function \bigHashing maps the 2D or 3D lattice coordinate to a single scalar integer code of 32-bit precision.
We use the \texttt{XOR} between the product of each dimension and a large but representable prime number as popularized by \citet{teschner2003optimized}.

\paragraph{Noise}
Finally, the noise function \noise uses the hashed lattice code as a key to sample a unit Gaussian.
Using the same key results in the same value.
The key given to noise is the suitable combination (depicted with $\oplus$, an operator to combine random keys) of the lattice key, and a global key.
The lattice key makes the random value change over space (random field) while the global key makes this change depending on the seed.
For a fixed global key, areas with the same lattice value are constant and these areas form the subdomains \pixelSubdomains.

Coming back to the normal Monte Carlo part of our estimator, \seed is fixed and so far we simply sample a continuous coordinate \sample, run \lattice, \hashing and \noise on it and record the random variate.
We sum up those variates and divide by the sample count.
This produces samples from a random variable with desired correlation but not with unit variance.
To tackle this, we also compute variance in respect to \seed in the same Monte Carlo loop.

If we were able to represent a sufficiently large histogram of size $\cardinality=2^{32}$ to distinguish all values of $\bigHashing(\lattice(\sample))$, we could just keep track of the number of samples that fell into each of the \cardinality random variables.
The number of samples per variable, divided by the total number of samples, would then be an estimator of each area $\estimate\area$.
Obviously, it is infeasible to use storage in the order of $2^{32}$ per pixel.

\mymath{\histogram}w

To reduce the memory requirement for storing the histogram, we apply a different (``small'') hash function \smallHashing to compute a bin index $
\bin = \smallHashing(\lattice(\sample))$.
This hashing reduces the many discrete lattice coordinates to very few values, \eg $\binCount=256$.
Few enough to maintain a histogram of all of them in a per-GPU thread buffer.

During Monte Carlo estimation, for every sample, we compute \bin and increment the histogram accordingly.
After estimation has completed, we divide all histogram values by \sampleCount, making the histogram a categorical distribution (positive values that sum to one in no particular order). 
We now know (modulo hash collisions) the weight of all \binCount different variates that are combined.
With this information we can compute the resulting variance as $\sum a_i^2$.

For \smallHashing, we simply, but non-obviously, use $\smallHashing(x,y,z)=x + 16y$, \ie a repeating pattern of $16\times 16$ blocks that entirely ignores depth, as different lattice $z$ values for the same lattice $x,y$ are resolved by occlusion.
\bigHashing needs to respect depth, so as to support perspective, parallax, occlusion and all that.
Certainly, \smallHashing can lead to collisions, and these would lead to double-counting.
Fortunately, these only happen at extreme minification or magnification around 16-to-1.

Why do we employ two hashings and not just a single one?
Recall that we need to define the random field, alongside a sketch for the variance: we thus have two quite different purposes.
If we used the resolution of the random field (millions) to estimate variance, the histogram would use far too much memory.
If we only used the resolution of the histogram (100s), the values would repeat, a particularly unwanted spatial correlation.

\mysubsection{Prefiltering}{Prefiltering}

\paragraph{Problem}
While the above works, it has a sampling issue.
Consider a single screen pixel that would be subject to minification, that is, many random variables map to a single pixel, \eg to 100 random variables.
The estimators presented so far do handle this by ``counting'' the 100 variables to re-scale the variance to the desired level. This is an estimator of the variance, but the variance of the estimator itself will simply be too high:
Obviously, counting 100 variables is unlikely to be accurate with 10 samples. In general, the weighted average of more variables needs more correction due to the law of large numbers.
\paragraph{Solution intuition}
A better solution would be to only have fewer values to count, say 10, which will have a lower estimator variance and less correction.

This is similar to the general problem of prefiltering for procedural textures.
Our method is just a sophisticated procedural texture, but ultimately is subject to filtering considerations, too.
So instead of averaging a high number of high-frequency procedural texels, one can also average a low number of prefiltered texels.

\paragraph{Clamping}
A classic method from the graphics literature has shown to solve our issue: clamping \cite{norton1982clamping}.
The idea is that if a procedural texture is a combination of different levels of detail (``bands''), one can simply clamp (set to zero, if their mean is zero anyway) the bands that will be averaged away.

Hence, first, we determine the pixel magnification factor.
It is larger than one, if more than one noise pixels falls onto a pixel, or less if fewer pixels do.
It can be computed as the cross product of the columns of the Jacobian of the mapping.
A mapping to be used with this variant of our mapping is supposed to provide this interface.
When using our method with ray-tracing implemented in a framework supporting differentiation (such as JAX), this is supported.
In graphics APIs that support rasterization, one can simply call the fast operations \texttt{dFdx} and \texttt{dFdy} \cite{OpenGLSpec} and use the Jacobian determinant of these differentials.
Knowing the pixel footprint, we can use the approach of \citet{norton1982clamping} to separate the noise levels that indeed matter, and those that will just get averaged away.
The relevant noise level is selected using the logarithm of the footprint, similar to MIP level selection in OpenGL.

\paragraph{Tri-linearity}
Instead of using a single most-relevant noise resolution, we apply the analogy of tri-linear MIP mapping: we compute the two most relevant levels and blend them, ignoring all others.
As both levels hold independent random variables, this has to be accounted for when computing variance again.
So if, for example, one random variable from level $l$ has the trilinear weight of 0.2, and four random variables from level $l+1$ fall onto one pixel with weight 0.8/4, each, the expected variance is $0.2 + 4\times 0.2^2=0.36$.

\mymath{\sampling}{\mathcal S}
\mymath{\latticeCoord}{\mathbf z}

\mysubsection{Summary}{Summary}
The resulting method is summarized in \refAlg{Main}.
A variant using scatter-type rasterization computation and prefiltering is provided in \refSecSuppl{Pseudocode}{2}.

\begin{algorithm}[htb]
    \caption{Path tracing variant.}
    \label{alg:Main}
    \begin{algorithmic}[1]
    \Procedure{Estimate}{\pixelDomain, \seed}
        \State \pixelValue = 0
        \color{correctedColor}
        \State \histogram = \Call{CreateHistogram}{\binCount}
        \color{classicColor}
        \For{\sampleIndex = 0\ldots\sampleCount}
            \State $\sample$ = \Call{Sample}{\pixelDomain, \sampleIndex}
            \State $\latticeCoord$ = $\lattice(\sample)$
            \State $\pixelValue$ = $\pixelValue + \noise(\bigHashing(\latticeCoord) \oplus \seed)$
            \color{correctedColor}
            \State $\histogram$ = \Call{IncrementBin}{$\histogram, \smallHashing(\latticeCoord)$}
            \color{classicColor}
        \EndFor
        
        \State \Return{\change{$(\pixelValue/\sampleCount) \cdot (\sum(\histogram/\sampleCount)^2)^{-1/2}$}}
    \EndProcedure
    \end{algorithmic}
\end{algorithm}

\renewcommand*\Call[2]{\textproc{#1}(#2)}

\mysection{Analysis}{Analysis}
We study the convergence both analytically (\refSec{Analytical}) and empirically (\refSec{Empirical}), and measure computational performance (\refSec{Performance}). 

\mysubsection{Analytical}{Analytical}
We show by mathematical deduction that our estimator is consistent (\refThm{consistency}), and bound its bias (\refThm{bias}).

\mymath{\mapping}{\varphi}
\mymath{\mappingNom}{u}
\mymath{\mappingDenom}{v}

\begin{theorem}[Consistency] \label{thm:consistency}
$\estimate\correctedPixelValue=\correctedPixelValue$
once
$\sampleCount\to\infty$.
\end{theorem}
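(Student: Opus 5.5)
The estimator is a ratio $\estimate\correctedPixelValue=\mappingNom_\sampleCount/\sqrt{\mappingDenom_\sampleCount}$ of two Monte Carlo averages computed in the same loop. The numerator is
\[
\mappingNom_\sampleCount=\frac{1}{\sampleCount}\sum_{\sampleIndex=1}^{\sampleCount}\noise(\bigHashing(\lattice(\sample))\oplus\seed).
\]
The denominator is $\mappingDenom_\sampleCount=\sum_{\bin}(\histogram_\bin/\sampleCount)^2$, where $\histogram_\bin$ counts the samples with $\smallHashing(\lattice(\sample))=\bin$. The plan is to show that each converges almost surely, with respect to the sampling of $\sample$ and for fixed $\seed$, and then to pass the limit through the continuous map $\mapping(\mappingNom,\mappingDenom)=\mappingNom/\sqrt{\mappingDenom}$. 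Throughout I work under the two stated assumptions: finitely many subdomains $\pixelSubdomains$ with areas $\area$, and $f$ constant with value $\worldConstant(\seed)$ on each. I also assume, as the statement implicitly does, that $\smallHashing$ is injective on the $\cardinality$ lattice cells in the footprint (no collisions). If I use i.i.d.\ uniform samples I need only the strong law of large numbers; for QMC I would instead use the Koksma--Hlawka / equidistribution argument on indicator functions of Jordan-measurable sets.

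\emph{Step 1, numerator.} The integrand takes the value $\worldConstant(\seed)$ on $\pixelSubdomains$. Hence
\[
\mappingNom_\sampleCount=\sum_{\subIndex}\worldConstant(\seed)\,\frac{\histogram_\subIndex}{\sampleCount},
\]
where $\histogram_\subIndex/\sampleCount$ is the empirical frequency of hits in $\pixelSubdomains$. By the strong law, $\histogram_\subIndex/\sampleCount\to\area$ almost surely. Because $\cardinality$ is finite, the finite sum converges to $\sum_\subIndex\area\worldConstant(\seed)=\pixelValue(\seed)$.

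\emph{Step 2, denominator.} Without collisions, histogram bins are in bijection with subdomains. The same frequencies therefore give
\[
\mappingDenom_\sampleCount=\sum_\subIndex(\histogram_\subIndex/\sampleCount)^2\to\sum_\subIndex\area^2 .
\]
This limit equals $\var(\pixelValue)$ by \refEq{varianceEstimator}. It is crucial that one set of frequencies drives both limits, since this gives joint almost-sure convergence of the pair $(\mappingNom_\sampleCount,\mappingDenom_\sampleCount)$ for free.

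\emph{Step 3, ratio.} The limit $\sum\area^2$ is at least $1/\cardinality>0$ by Cauchy--Schwarz with $\sum\area=1$, so $\mapping$ is continuous at the limit point. The continuous mapping theorem then yields $\estimate\correctedPixelValue\to\pixelValue/\sqrt{\var(\pixelValue)}=\correctedPixelValue$. The expected obstacle is not the analysis but pinning down the hypotheses: handling hash collisions of $\smallHashing$ (either assume them away as above, or note that the limit is then $\sum_\bin(\sum_{\subIndex\in\bin}\area)^2$, so consistency holds only to the collision-free target), and the measure-zero boundaries of $\pixelSubdomains$ needed for QMC convergence. Finally, the theorem should be read as $\lim_{\sampleCount\to\infty}\estimate\correctedPixelValue=\correctedPixelValue$ almost surely, and I would state it that way.
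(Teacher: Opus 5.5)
Your proof is correct and follows the paper's own argument: the continuous mapping theorem applied to $\varphi(u,v)=u/\sqrt{v}$, with $u$ the standard consistent pixel estimator, $v\to\sum_i a_i^2$ under the no-collision assumption on $\mathcal{H}_\mathrm{s}$, and $v>0$ because there are finitely many subdomains. You make these steps more explicit, via the strong law on the hit frequencies and the Cauchy--Schwarz bound $\sum_i a_i^2\ge 1/n$; note only that the shared-frequency remark is not actually crucial, since almost-sure convergence of each component already gives joint almost-sure convergence.
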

\begin{proof}
The Continuous Mapping Theorem~\citep[Therorem~2.3]{vanderVaart1998} states that a continuous mapping \mapping preserve limits even if its arguments are random variables.
\refEq{unitVarianceEstimator} is exactly such a continuous mapping $\mapping(\mappingNom, \mappingDenom) = \mappingNom / \sqrt{\mappingDenom}$, which takes two random variables $\mappingNom=\estimate\pixelValue$
and 
$\mappingDenom=\estimate{\var(\pixelValue)}$ as arguments.
The limits of $\mappingNom$ and $\mappingDenom$ are \pixelValue and $\var(\pixelValue)$, respectively: \mappingNom is the common path tracing pixel estimator, known to be consistent.
The other, \mappingDenom, is integration of a piecewise constant function, that trivially is consistent as long as there is no hash collisions due to \smallHashing.
Finally, $\mapping$ is smooth, considering $\mappingDenom>0$ since the number of sub-pixel domains is finite.  
\end{proof}

\mymath{\magic}{\psi}
\mymath{\areaEstimates}{\estimate{\mathbf a}}
\mymath{\trueAreas}{\mathbf a}
\mymath{\expectation}{\mathbb E}

\begin{theorem}[Bias] \label{thm:bias}
The bias in \refEq{unitVarianceEstimator} is $\expectation[\widehat{P_{\mathrm{Cor}}}^{(m)}] - P_{\mathrm{Cor}} = O(m^{-1})$.
\end{theorem}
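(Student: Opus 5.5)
We expand $\estimate\correctedPixelValue = \mapping(\mappingNom,\mappingDenom)$ with $\mapping(\mappingNom,\mappingDenom)=\mappingNom\,\mappingDenom^{-1/2}$ to second order (the delta method) around the limits identified in the proof of \refThm{consistency}. Write $\mappingNom = \estimate\pixelValue^{(m)} = \frac1m\sum_j X_j$ with $X_j = \world(\sample)(\seed)$, and $\mappingDenom = \sum_i \estimate{\area}^2$ with $\estimate{\area} = \frac1m\sum_j \mathbf 1[\sample\in\pixelSubdomains]$. We assume no \smallHashing collisions, as in \refThm{consistency}. The samples are i.i.d. with density $\prob$, which we take uniform on \pixelDomain for normalization. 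Conditional on the field, the vector $(m\estimate{a}_1,\dots,m\estimate{a}_\cardinality)$ is multinomial with parameters $(m,\trueAreas)$, and $\mappingNom = \sum_i \estimate{\area}\worldConstant$. This exact representation lets every moment be computed in closed form.

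The first step computes the moments of the two arguments. Multinomial identities give $\expectation[\estimate{\area}] = \area$ and $\expectation[\estimate{\area}^2] = \area^2 + \area(1-\area)/m$. Hence $\expectation[\mappingDenom] = \sum_i \area^2 + (1-\sum_i\area^2)/m$, so $\mappingDenom$ itself has an $O(m^{-1})$ bias. The fluctuations satisfy $\mappingNom-\pixelValue = \sum_i(\estimate{\area}-\area)\worldConstant$ and $\mappingDenom-\var(\pixelValue)$, both centred at order $m^{-1/2}$ with second moments $O(m^{-1})$. Their third and fourth central moments are $O(m^{-2})$, by standard multinomial cumulant bounds.

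The second step is a Taylor expansion of $\mapping$ at $(\pixelValue,\var(\pixelValue))$, where $\var(\pixelValue)=\sum\area^2>0$, up to second order with a third-order remainder. Taking expectations, the zeroth term gives $\correctedPixelValue$. The linear terms contribute $\partial_\mappingDenom\mapping\cdot\expectation[\mappingDenom-\var(\pixelValue)] = O(m^{-1})$, since $\expectation[\mappingNom]=\pixelValue$ exactly (the classical estimator is unbiased). The quadratic terms involve $\expectation[(\mappingNom-\pixelValue)(\mappingDenom-\var(\pixelValue))]$ and $\expectation[(\mappingDenom-\var(\pixelValue))^2]$, which are both $O(m^{-1})$ by step one; the $\partial^2_{\mappingNom}\mapping$ term vanishes. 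Summing yields an explicit leading constant times $m^{-1}$.

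The main obstacle is controlling the remainder, because $\mapping$ is singular at $\mappingDenom=0$ and the expectation must be finite uniformly. Fortunately $\mappingDenom=\sum\estimate{\area}^2 \ge 1/\cardinality$ deterministically, by Cauchy--Schwarz with $\sum\estimate{\area}=1$. So $\mappingDenom$ stays in the compact interval $[1/\cardinality,1]$, on which $\mapping$ and its derivatives are bounded polynomially in $\mappingNom$. The finite moments of Gaussian $\worldConstant$ then bound the cubic remainder by $C\,\expectation[\|(\mappingNom-\pixelValue,\mappingDenom-\var(\pixelValue))\|^3] = O(m^{-3/2})$, where the cubic moment follows from the fourth-moment bound via Hölder. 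This gives total bias $O(m^{-1})$. Finiteness of \cardinality, already invoked in \refThm{consistency}, is what makes the lower bound work. If the bias is meant averaged over \seed, the $\mappingNom$-dependent terms vanish by symmetry of $\worldConstant$, and the statement holds trivially at zero. I would therefore present the argument conditional on \seed, where it is nontrivial.
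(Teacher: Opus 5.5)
Your proof is correct. Like the paper, you condition on $\xi$ and run a second-order Taylor (delta-method) expansion driven by the multinomial moments of the sample counts, but you expand in different coordinates.

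The paper writes the estimator exactly as $\psi(\hat{\mathbf a})$ with $\psi(\mathbf x)=\sum_i x_iF_i/\sqrt{\sum_i x_i^2}$ and expands in the $n$-dimensional area vector. Since $\hat{\mathbf a}$ is unbiased, the linear term vanishes identically, and the whole $O(m^{-1})$ is the Hessian of $\psi$ contracted with $\mathrm{Cov}(\hat a_i,\hat a_j)=(a_i\delta_{ij}-a_ia_j)/m$. That gives the leading constant directly and never needs moments of the nonlinear statistic $\sum_i\hat a_i^2$.

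You instead expand the two-argument map $\varphi(u,v)=u\,v^{-1/2}$ from the consistency proof. Because $v$ is itself biased by $(1-\sum_i a_i^2)/m$, the $O(m^{-1})$ is split between the linear $v$-term and the quadratic terms. The cost is that you must bound the higher moments of $v$ separately.

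Your remainder control is the more careful of the two. The deterministic bound $\sum_i\hat a_i^2\ge 1/n$ (and likewise $\sum_i a_i^2\ge 1/n$) keeps the whole segment from $(P,\mathrm{Var}(P))$ to $(u,v)$ in a region with bounded third derivatives. So $|\mathbb E[R]|\le C\,\mathbb E\|\Delta\|^3=O(m^{-3/2})$ holds rigorously.

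The paper instead takes the supremum over an $\epsilon$-ball around $\mathbf a$ that is assumed to contain the random intermediate point. That need not hold, since $\hat{\mathbf a}$ can be any point of the simplex. It then pulls that supremum out of a \emph{signed} third moment to claim $O(m^{-2})$. Done correctly, this requires absolute third moments, which are only $O(m^{-3/2})$. That is still enough for the theorem, and taking the supremum over the simplex, where $\|\mathbf x\|\ge n^{-1/2}$ by exactly your Cauchy--Schwarz argument, is the natural repair.

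One minor simplification: conditional on $\xi$, both $u$ and the intermediate point are convex combinations of the fixed $F_i$. Hence $|u|\le\max_i|F_i|$, and you do not need Gaussian moments at all.
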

\begin{proof}
Once the noise field is realized by fixing \seed, the values of all subdomains \pixelSubdomains have the fixed value \worldConstant and the area of each subdomain is \area.
Now consider the unbiased estimate of each subdomain area $\estimate\area$ that just counts the number of samples falling into area \subIndex divided by the total number of samples \sampleCount.
Let
$\areaEstimates = (\estimate{a}_1, \dots, \estimate{a}_n)$ 
and 
$\trueAreas = (a_1, \dots, a_n)$
be the vectors of all estimated and true areas, respectively.
$\estimate\correctedPixelValue$ can then be rewritten as
\begin{equation}
    \estimate\correctedPixelValue = 
    \magic(\areaEstimates)
    \text{ with }
    \magic(\mathbf{x}) = 
    \frac
    {\sum_{i=1}^\cardinality x_i \worldConstant}
    {\sqrt{\sum_{i=1}^\cardinality x_i^2}}
    .
\end{equation}
A second-order Taylor expansion of \magic about \trueAreas yields
\begin{align} \label{eq:taylor}
    &
    \expectation[\magic(\areaEstimates)]
    =
    \expectation[\magic(\trueAreas + (\areaEstimates - \trueAreas))] \\
    &=
    \expectation[\magic(\trueAreas ) + \nabla \magic(\trueAreas)^{\mathsf{T}} (\areaEstimates - \trueAreas) + \frac{1}{2} (\areaEstimates - \trueAreas)^{\mathsf{T}} H_{\magic}(\trueAreas) (\areaEstimates - \trueAreas) + R(\areaEstimates)]
    \nonumber
    \\ 
    &=
    \magic(\trueAreas) + \underbrace{\nabla \magic(\trueAreas)^{\mathsf{T}} \expectation[\areaEstimates - \trueAreas]}_{= 0} + \frac{1}{2}\sum_{i,j} \frac{\partial^2\magic}{\partial{x_i}\partial{x_j}} \Bigg|_{\trueAreas} \mathrm{Cov}(\hat{a}_i, \hat{a}_j) + \expectation[R(\areaEstimates)], \nonumber
\end{align}
where $H_{\magic}$ is Hessian and $R$ is the higher-order remainder. The expectation of the first-order term vanishes because \areaEstimates is unbiased.
To derive the covariance term, recall that each of the \sampleCount samples independently land in one of the \cardinality subdomains.
\mymath{\sampleIndictator}{X}
For sample $\omega_s$ with index $s$, the random variable $\sampleIndictator_{i,s} = \mathbb{I}_{\Omega_i}(\omega_s)$  has the following properties:
\begin{IEEEeqnarray}{rClrClrCl} 
\expectation[\sampleIndictator_{i,s}] = \expectation[\sampleIndictator_{i,s}^2] &=& a_i & \;\; \text{ and } \;\;
\expectation[\sampleIndictator_{i,s}\sampleIndictator_{j,s}] &=& 0, \label{eq:xiprops} 
\end{IEEEeqnarray}
where the left equality holds because $X_i$ is binary and the right equality holds because a sample cannot fall in two subdomains simultaneously.
Then, $\estimate\area$ can be rewritten as
\begin{equation}
    \estimate\area = \frac{1}{m}
    \sum_{s=1}^m \sampleIndictator_{i,s}.
\end{equation}
The covariance term in \refEq{taylor} becomes
\begin{align} \label{eq:cov_big}
   \mathrm{Cov}(\hat{a}_i, \hat{a}_j) &= \mathrm{Cov}\Bigg(\frac{1}{m}\sum_{s=1}^m X_{i,s} ,  \frac{1}{m}\sum_{t=1}^m X_{j,t}\Bigg) \\ 
   &= \frac{1}{m^2} \sum_{s=1}^m \Bigg(\mathrm{Cov}(X_{i,s}, X_{j,s}) + \bcancel{\sum_{\substack{t=1\\s \neq t}}^m \mathrm{Cov}(X_{i,s}, X_{j,t})}\Bigg), \nonumber 
\end{align}
where crucially the cross terms ($s \neq t$) vanish because samples are independent. 
Expanding the inner term and using \refEq{xiprops} yields
\begin{equation} \label{eq:cov_small}
    \mathrm{Cov}(X_{i,s}, X_{j,s}) = a_i \delta_{ij} - a_i a_j,
\end{equation}
where $\delta_{ij}$ is the Kronecker delta.
Substituting \refEq{cov_small} to \refEq{cov_big} yields
\begin{equation}
    \mathrm{Cov}(\hat{a}_i, \hat{a}_j) = \frac{a_i \delta_{ij} - a_i a_j}{m} = O(m^{-1}).
\end{equation}
Similarly, it can be shown that the remainder term $\expectation[R(\areaEstimates)] = O(m^{-2})$ by expanding the third moments of $\areaEstimates$, which also only include diagonal terms due to independent samples. This step is mechanical and included in \refSecSuppl{Proof}{3}.
Finally, notice that the double sum over $i,j$ in \refEq{taylor} has constant terms with respect to $m$, and that the partial derivatives of $\magic$ are also constant. Substituting both results into \refEq{taylor} completes the proof.
\end{proof}
\begin{remark*}
\refThm{consistency} and \refThm{bias} can be generalized to support prefiltering by concatenating the subdomains of two noise levels and re-normalizing their areas.
In addition, \refEq{cov_big} relies on the assumption of independent samples, which technically does not hold for \ac{QMC} sequences. Nonetheless, empirically we observe similar (or even faster) convergence, as shown next.
\end{remark*}

\mysubsection{Empirical}{Empirical}
In practice, both theorems cannot be guaranteed due to potential hash collisions of \smallHashing.
We thus conduct experiments to validate that our estimator achieves high-quality noise under reasonable settings.
We study three tasks, that correspond to different maps (flows, correlations) that we impose on the noise: a simple scroll, optical flow of a person walking a street and ray-tracing of three diffuse falling spheres.
Quality of our estimated noise depends on two variables: the number of Monte Carlo samples \sampleCount and the histogram size \binCount. \refFig{Convergence} analyzes this relation.

\myfigure{Convergence}{Convergence analysis:
Different colors mean different tasks.
In the top row, horizontal axis is sample count \sampleCount, while it is bin count \binCount in the second row.
The vertical axis is, first, \ac{MSE} of the estimator, where less is better, second, the whiteness of the spectrum \ie how similar the noise is to white noise, measured as one minus the $L_2$-distance to the flat spectrum of white noise (less is better), 
\change{third} the variance of the estimated random variable (ideally, unit variance, so, 1),
and \change{fourth and finally} the Gaussianity measured by the Shapiro-Wilk test (less is better). 
}

The first plot shows estimator \ac{MSE} for different sample counts.
We first compute a reference using a large sample count (32,768) and then produce many instances of the noise with the same seed for the random field but different seeds for the MC/QMC samples.
The error is then measured by subtracting these instances from the reference and averaging the squared distance. The \ac{MSE} is, by construction, the sum of variance and squared bias. As predicted by \refSec{Analytical}, the error decreases with more samples for \ac{MC} at an approximately linear rate. We further observe that \ac{QMC} converges faster, matching common practice in rendering.
It does so slightly slower for the ray-tracing task, in which many estimation problems are time-space averaged.
Simpler problems like zooming or optical flow converge more directly.

The second plot of \refFig{Convergence} shows the whiteness of the spectrum.
For this, many instances with different sampling seeds and different random seeds for the random field are generated and the spectrum of these is averaged and $L_2$-compared to a flat spectrum of white noise.
Ideally, all our spectra were white.
We see, that the difference to white noise is quite small, in the order of 3\,\% of the signal amplitude (on average, 1).
With increasing sample counts, this is increasing, which first looks counter-intuitive but eventually is due to increased hash collisions, which leads to these spectra deviating from white.

The third plot of \refFig{Convergence} shows the change of variance of the random variable estimated for different bin counts \binCount.
Ideally, the variance of the variable we estimate was 1, unit-variance.
We see, that for a low number of bins, this is not the case.
In particular, for a single bin, the variance is around 0.1, indicating around $10^2$ random variables are combined into a pixel resulting in too little variance.
With more bins (not samples, this is for a fixed sample count of $\sampleCount=1024$), the variance is approaching 1.
Noting the horizontal log scale, the sweet spot is around 64 bins, which is the value used in most results of this paper.

Finally, even a white spectrum and a variance of 1 does not obviously mean the resulting noise is Gaussian.
To analyze this, in the fourth and last plot of \refFig{Convergence}, we test for Gaussianity using the Shapiro-Wilk test, which returns the probability, that a sample of a certain size is indeed from a Gaussian.
A value of 1 means it certainly is from a Gaussian.
We note, that at low bin counts, our noise indeed is Gaussian, as the difference, in units of probability of being Gaussian, is small.
This property seems unaffected by the bin count.

\myfigure{Performance2}{
Performance instrumentation.
In every subplot, the vertical axis is time.
In the first row, the horizontal axis is
sample count \sampleCount,
the \change{second one} is image resolution,
and the \change{third one} is the bin count \binCount of the unique count estimator. 
All horizontal axes are in log scale.
Different colors are different tasks, while different line styles mean different underlying noise fields (with prefiltering is solid while dashed is the prefiltered grid).
The top row of plots is a different implementation (JAX), while the bottom row is the one in WebGPU.
Please see text, for an interpretation.
}

\mysubsection{Computational performance}{Performance}
The computational performance is analyzed in \refFig{Performance2}.
The JAX platform uses four A100s that produce four frames in parallel, but time is given per frame.
The WebGPU platform uses an RTX 3090 with a hardware rasterizer to render 3D scene objects.

We observe that compute time scales with sample count (first column).
This is expected, and due to parallelism, the scale is sublinear (note the log scale).
We also note that the absolute speed depends on the complexity of the map (color).
A more complex map is markedly slower, as we do not only need to compute it, but also its \texttt{jacfwd}, its Jacobian to compute pixel footprints.
This plot does not show quality, but a prefiltered grid is slower than a simple one. 
This is, as two fields are fetched and blended, a substantial part of the overall compute.
In practice, this pays off in higher quality.

Scalability in number of pixels, or image resolution (note log scale and the quadratic increase as pixel count is quadratic in resolution), is also as expected, and favourably sublinear over that range as GPU parallelism is exploited.

Finally, the third row looks at the variation of bin count \binCount and holds resolution and sample count fixed.
Here, there is only a mild relation, as more bins mostly cost memory and not compute.
Writing to a larger histogram increases memory usage, which may reduce available threads. It also increases contention vice-versa: if many samples target the same few bins, writes become serialized due to atomic pressure.
These effects seem to cancel out, leading to good scalability.

When comparing different implementations, we see that a simple WebGPU implementation has similar timing on game-optimized hardware, and also scales better in the number of samples when rendering geometry.
Unlike the JAX implementation which uses \texttt{jacfwd}, our WebGPU implementation uses hardware rasterization.
In this setting, we take advantage of built-in per-fragment derivatives to compute footprints \cite[Sec. 8.13]{OpenGLSpec} much easier and more efficient.

Finally, our method is parallel over time, so on a 96 GPU cluster, a 96 frame-animation is produced in 
$1\times\qty{2.44}{\milli\second}$, 
where other approaches would require 
$96\times\qty{2.1}{\milli\second}$ 
\cite{burgert2025gowiththeflow} or
$96\times\qty{500}{\milli\second}$ \cite{chang2024how}, respectively.

\mysubsection{Comparison}{Comparison}

Here, we compare our approach against \citet{burgert2025gowiththeflow}, the state-of-the-art method to produce similar noise \change{to find it provides similar quality and speed but random access in time.}
\begin{table}
\caption{Comparison of different methods across various metrics. Arrows indicate whether higher ($\uparrow$) or lower ($\downarrow$) values are better.}
\label{tab:Comparison}
\setlength{\tabcolsep}{4pt}
\centering
\begin{tabular}{lrrrrr}
\toprule
Method & 
LPIPS $\downarrow$ & 
SSIM $\uparrow$ & 
PSNR $\uparrow$ & 
WRP-E $\downarrow$ &
3D WRP-E $\downarrow$ \\
\midrule
Random & 0.24 & 0.62  & 23.41  & 369.73  & 790.03 \\
Fixed & 0.23  & 0.62  & 23.48  & 316.07 &  784.45 \\
GWTF & 0.24 & 0.62   & 23.44  & 276.67 &  766.99 \\
Ours & 0.24 & 0.62  & 23.67  & 274.21 &  706.25 \\
\bottomrule
\end{tabular}
\end{table}

We evaluate following the super resolution setup of \cite{burgert2025gowiththeflow} (GWTF).
We use \texttt{StableDiffusion-x4-upscaler} as the super resolution diffusion model.
We report average numbers for 50 videos from DAVIS dataset \cite{pont-tuset2017DAVIS}.
Each video is down-scaled four times and up-scaled by the diffusion model using four different types of noises: random, fixed, GWTF, and ours.
We produce our noises using an absolute 3D optical flow, \ie 2D coordinate and time for every pixel at its first appearance, estimated from the low-resolution videos with \texttt{Cotracker3} \cite{karaev24cotracker3}.
We evaluate the fidelity per frame using LPIPS, SSIM, and PSNR, local temporal consistency using the warping error (WRP-E) proposed in
\cite{laiLearningBlindVideo2018}, and global temporal consistency using a 3D warping error (3D WRP-E) which uses 3D optical flows to compute warping errors as detailed in \refSecSuppl{FlowDetails}{6}.
We conduct the experiments in five independent runs with different random seeds for generating noises in each method, and report the average.

Results are seen in \refTab{Comparison}.
In summary, we consider our performance non-inferior (\change{while being more general and random-access in time}) for the first four columns.
According to LPIPS and SSIM, neither the original method nor ours has an edge relative to trivial baselines.
For PSNR and WRP-E, ours has a small, and probably statistically significant, lead.
As measured in 3D WRP-E, our method has noticeable improvements for global temporal consistency.

\mysection{Results}{Results}

\mysubsection{Video generation}{VideoResults}
The premiere application of spatio-temporal correlated unit-variance noise is the control of diffusion-based generative video models.
\refFig{Teaser} and \refFig{Results} show still frames from such generated videos. We refer readers to the supplemental video for full sequences.

Our generation pipeline leverages the Go-with-the-Flow \cite{burgert2025gowiththeflow} text-to-video diffusion model.
The input noise, originally at $720 \times 480 \times 16$ resolution, is
spatially down-sampled by a factor of 8 to match the VAE latent space
($90 \times 60 \times 16$) using a variance-preserving uniform
down-sampling method to ensure the noise distribution remains compatible
with the diffusion model's assumptions.
Temporally, we utilize a ``direct noise'' mapping where 13 noise frames are up-sampled to 49 video frames via the VAE’s temporal decoder, bypassing additional temporal down-sampling.
For all experiments, we utilize 200 inference steps, a classifier-free guidance scale of 6.0, and a noise degradation factor of 0.2 (mixing 20\,\% random Gaussian noise) to balance temporal consistency with creative prompt adherence. 

\paragraph{Conflicts}
It is not obvious what happens if correlation and prompt for a diffusion model are at odds.
For example, is the diffusion model required to perfectly align the result to the edges of a sphere if the noise is made from a sphere seen by a rotating camera? It turns out that these two conflicting goals are often combined quite naturally.
\refFig{Conflict} shows a correlation that is at obvious contradiction to the prompt.
The correlation is the one of falling spheres while the prompt asks for an ice cube.
The resulting video shows cubes and chickens instead of spheres, yet, with the motion of the sphere.
See the supplemental video for an animated version of this result.

\paragraph{Light transport-consistency}
Our method supports advanced correlations such as the one resulting from reflection and refraction.
What effect does that have?
\refFig{LightTransport} of different prompts with different correlations induced by different light transport.
The prompt is asking for "diffuse", for "mirror" and for "glass" while the correlation is from 3D scenes which are diffuse, reflective and refractive.
When considering all combinations, the ones that match tend to provide the best results (diagonal).

\mysubsection{Stylized shading design}{ShaderDesignResult}
A common approach for producing different NPR styles is to threshold or blend noise.
The coordinate space in which the noise is defined becomes an important consideration.
2D screen-space noise is the simplest to implement but is subject to swimming or the ``shower door effect'' upon camera motion \cite{benard2011state}.
On the other hand, 3D world-space noise is stable with respect to camera motion, but the size of the noise pattern changes due to perspective projection.
To keep the noise size constant, one natural attempt is to introduce multiple levels of noises, but na\"ively switching or blending between levels would break the noise properties, causing inconsistent shading results.
This problem is solved by our approach.

In \refFig{NPR}, we demonstrate hatching stylization driven by our multi-level, variance-corrected noise.
Using regular 3D noise produces patterns that are fixed-size in world space but inconsistent on screen: too coarse for close-up parts, and too dense for distant parts.
It further causes hatching strokes to fade as variance decreases.
Our noise effectively fixes both issues and ensures constant pattern size and density everywhere regardless of object position and camera zoom.
We refer readers to our supplemental document and video for more comparisons under synchronized camera motion.

\mysubsection{Animated texture synthesis}{textureSynthesisResult}

\myfigure{TextureSynth}{Still frame of our animated texture synthesis at the left, an epipolar slice of the animation with our noise in the middle and using random noise at the right.}
We can also use our noise to initialize an optimization.
This leads to results having the desired correlation, too, without even enforcing it in the optimization loss.
An example is texture synthesis following \citet{gatys2015texture} which we perform frame-by-frame, with the only extension that the initial solution is a frame of our temporally correlated noise.
The resulting videos combine the desired textures and motion induced by the correlation. \refFig{TextureSynth} shows an animated texture with temporal correlation of two spheres falling onto a ground plane and bouncing according to a physics simulation.
The epipolar slices show the coherence and correlation in ours, while the frames in the baseline have no coherence.
Please see the supplemental video for an example of this task.

\mysubsection{Steganography}{SteganographyResult}
In steganography, a user provides a flow or a 3D scene (the ``message'') and we produce an animation.
A single frame does not reveal the message, neither to a human nor an algorithm.
Two, or more frames reveal the message to a human, who will immediately and strongly perceive the correlation, and to a machine, that can just run.
\refFig{Teaser} is an example of this, where a complex 3D structure (first row) is hidden in 2D noise with no spatial structure (third row).

\mysubsection{Inter-view correlation}{InterViewCorrelation}
As our noise is meticulously correlated also in 3D, it can be used to generate multiple views of a scene to be in correspondence.
To this end, we generate multiple noises from multiple views and generate images from them with the same prompt. 
If the views are in a lightfield rig layout, the resulting image set forms a lightfield.
The same holds for stereo pairs.
\refFig{Lightfield} shows an example of this task, where the
prompt is asking to generate "\texttt{a glass ball and Tarot cards}" inspired from the classic Stanford lightfield repository image.

\mysection{Conclusion}{Conclusion}
Our method is biased due to multiple reasons, including the nonlinear transform of estimators \cite{nimier2022unbiased,hery2018mcqmc} and the finite histogram \cite{flajolet1985probabilistic,whang1990linear}.
Developing an unbiased estimator might be a rewarding undertaking and future work.
We have targeted unit-variance as the property to retain and combine with temporal correlation, but other spatial correlation could be targeted, such as blue \cite{huang2024blue} or golden 
\cite{zhou2025golden} noise.

While the derivations in this paper can be considered involved,  the final method is simple: when using Monte Carlo to produce images that are meant to be combined with a generative model, keep track of how many random variables are accessed, and re-scale the variance by a square root and a division.

To summarize, we proposed a method to generate noise that is white-spectrum and unit variance in space, yet can follow intricate temporal correlations.
It is an extension of Monte Carlo path tracing, but can also be implemented in a rasterizer and achieves state of the art speed while being parallel over time.
We see this just as a first step to relate the statistics induced by simulation-based solutions relying on first principles, like path tracing, and the mathematical requirements of recent generative models.

\myfigure{Epipolar}{Epipolar slices (horizontal is space, vertical is time) from super-resolution result of different methods (rows).
}%
\myfigure{LightTransport}{Light transport results: the horizontal axis are prompts and the vertical axis is the correlations of the underlying noise.}

\myfigure{NPR}{Using our noise as an NPR primitive in two scenes \emph{(columns)}.
For previous methods \emph{(top)}, primitive density varies in 3D space to be consistent.
For ours \emph{(bottom)} it is kept constant and consistent.}%
\myfigure{Lightfield}{Producing correlated noise from many views can be used as input to generate lightfields (top, 4 samples of 48 shown) or stereo pairs (bottom, shown in anaglyph).}

\mycfigure{Results}{Video generation results.
The top row shows the noise, composed on top of the normal channel (not visible to the generation, only to visualize the 3D configuration).
All other rows show four frames from an animation generated from this noise using different prompts.
Note how all videos share the 3D structure, where the only guidance comes from the noise correlation.}

\mycfigure{Conflict}{Correlation of two falling spheres (seen in \refFig{Results}) in conflict with a prompt asking for ice cubes.
The generative model produces motion that complies with the one contained in our noise, but despite the correlation with falling spheres, cubes with the right motion are produced.
The supplemental video has more results for other conflicting prompts, too.}

\begin{acks}
TR would like to thanks IP for the empty slate and Thomas Leimkühler for discussion.
\end{acks}

\bibliographystyle{ACM-Reference-Format}
\bibliography{paper}

\end{document}

% --- supplement: supp.tex ---

\title{Supplemental Material for Tabula Rasa: Monte Carlo estimation of unit-variance noise with controlled spatio-temporal correlation}

\author{Tobias Ritschel}
\affiliation{%
	\institution{University College London}
	\country{UK}
}
\affiliation{%
	\institution{Meta Reality Labs}
	\country{USA}
}
\email{t.ritschel@ucl.ac.uk}

\author{Yang Zhou}
\affiliation{%
	\institution{Meta Reality Labs}
	\country{USA}
}
\email{zhouyang@meta.com}

\author{Nick Milef}
\affiliation{%
	\institution{Meta Reality Labs}
	\country{USA}
}
\email{nickmilef@meta.com}

\author{Mikhail Dereviannykh}
\affiliation{%
	\institution{Karlsruhe Institute of Technology}
	\country{Germany}
}
\affiliation{%
	\institution{Meta Reality Labs}
	\country{USA}
}
\email{devmishok@gmail.com}

\author{Chen Liu}
\affiliation{%
	\institution{University College London}
	\country{UK}
}
\email{chen.liu.21@ucl.ac.uk}

\author{Christophe Hery}
\affiliation{%
	\institution{Meta Reality Labs}
	\country{USA}
}
\email{christ.hery@gmail.com}

\author{Carl Marshall}
\affiliation{%
	\institution{Meta Reality Labs}
	\country{USA}
}
\email{csmarshall@meta.com}

\maketitle

\mymath{\pixelValue}{P}
\mymath{\correctedPixelValue}{\pixelValue_\mathrm{Cor}}
\mymath{\otherImageCoord}{\mathbf y}
\mymath{\pixelDomain}{\Omega}
\mymath{\pixelSubdomains}{\pixelDomain_i}
\mymath{\seed}{\xi}
\mymath{\var}{\mathrm{Var}}
\mymath{\world}{f}
\mymath{\cardinality}n
\mymath{\sampleCount}m
\mymath{\subIndex}i
\mymath{\sampleIndex}j
\mymath{\area}{A_\subIndex}
\mymath{\worldConstant}{F_\subIndex}
\mymath{\estimate}{\widehat}
\mymath{\sample}{\omega_\sampleIndex}
\mymath{\hashing}{\mathcal{H}}
\mymath{\bigHashing}{\hashing_\mathrm b}
\mymath{\smallHashing}{\hashing_\mathrm s}
\mymath{\lattice}{\mathcal L}
\mymath{\noise}{\mathcal N}
\mymath{\bin}b
\mymath{\binCount}o
\mymath{\prob}p

\section*{Overview}
In this supplemental we will give in \refSec{Primer} a primer on (unit)-variance and its role in graphics, specifically Monte-Carlo estimation, provide a rasterization-based version of our approach to complement the path-tracing one from the main paper in \refSec{Pseudocode}, derive the complete equations for the bias of our estimator in \refSec{Proof}, \change{derive and discuss the residual spatial correlation due to shared random variates by pixels in \refSec{SpatialCorrelation}}, and conclude with additional result for our \ac{NPR} shader use case in \refSec{NPR}. An additional illustration of our estimation procedure is also provided in \refFig{Concept}. 

\mysection{Primer on variance and correlation}{Primer}

This section recalls the reasons of variance not being unit and the meaning of temporal correlation in our context. 

\mywfigure{Didactics}{0.5}{Variance of multiple variates.}
\paragraph{Variance}
Consider a single normally distributed, zero-mean, unit variance random variable such as flipping a coin multiple times (\refFig{Didactics}, first row).
The property of ``unit variance'' says, that the standard deviation is 1, so on average, a sample from that distribution is 1 away from the mean 0.
Now consider the average of two compared to the one of ten coin flips.
It is intuitive, and follows from the law of large numbers, that the average of these will become more similar to 0.5 and, hence, how the variance will become smaller.
The same tendency is true, if the random variables are not averaged, but a weighted sum: even if they are weighted differently, the outcome is close to that weighted combination of means and less likely to be something else.
In this case, the variance is $\sum w_i^2$, where $w_i$ are the weights.
For the special case of averaging, this means weights are equal, a mere $w_i=1/n$, and hence $\sum 1/n^2=n/n^2=1/n$, so variance goes down inversely proportional to the number of variates.

The second row of \refFig{Didactics} shows this from a graphics or image processing perspective: flipping coins is similar to producing 10 images full of noise and averaging them so as the contrast will disappear, converging to a 50\,\% gray image in the limit.
Now filtering (\refFig{Didactics}, third row) also is nothing else than a weighted sum of pixel values.
If the pixel values are random variables and we need to filter them, this will change the variance too.

This work is about adjusting the change of variance back to 1 by knowing what the variance is.
Once we know the variance of a random variable, we can divide by its square root and the variable becomes unit-normal.
Visually, this would mean to restore the contrast, and algorithmically, this is what a program such as Photoshop would do: find the contrast and rescale inverse-proportional to it.

\paragraph{Correlation}
\mywfigure{3DCorrelation}{0.39}{3D and 2D  correlation.}
Correlation in this article can be between pixels at different spatial locations and different points in time.
The aim is to introduce no unwanted spatial correlation.
For some applications, a level of spatial correlation is beneficial to prevent aliasing \cite{mitchell1988reconstruction}, while for others it is to be avoided and pixels are treated as squares \cite{smith1995pixel}. 
In our scenario, spatial correlation  will be added by downstream methods but they can only do this faithfully if they can safely assume to be absent in their input (our output).
We do however want to introduce controlled temporal correlation.

In the notion of previous papers, temporal correlation was 2D and defined by means of flow: a pixel is correlated in one frame with another pixel at a different position (black lines in \refFig{3DCorrelation}).
This has limitations at image boundaries or in the presence of occlusion, it assumes opaque diffuse objects and struggles where not all images contain all pixels that seek to be correlated and hence correlation can not be expressed as flow (\refFig{3DCorrelation}, dotted lines).
We can support this notion, but we can also define correlation resulting from directly sampling the random field in 3D (\refFig{3DCorrelation}, right) to produce projections that have correlation unaffected by occlusion or clipping.
This idea even extends to refraction and reflection of rays computed by Whitted-style ray casting.

\mycfigure{Concept}{Illustration of our estimation procedure for two pixels, 0 and 1 in a 1D-image using a ray-tracing image formation model with 8 samples in each pixel.
For each sample, a ray is traced into the 2D world (red lines) to find the intersection with 2D surfaces (blue).
Each hit coordinate is hashed to two codes: one to estimate variance (upright Greek, $\upalpha$ to $\updelta$) and one to estimate the variate (Roman, A to P).
First, the mean of all values for the first (``large'') hash codes is computed.
Next, we build a histogram of all 
the repeating (``small'') hash codes within the pixel.
This allows us to recompute the variance and adjust the estimate as per the last row.
The hashing can have collisions, as shown in the pixel where large code $J$ and $L$ both map to code $\beta$, hence overestimating the variance.}

\mysection{Rasterization pseudo-code}{Pseudocode}
\begin{algorithm}[b]
    \caption{Rasterization variant.}
    \label{alg:Shader}
    \begin{algorithmic}[1]
    \Procedure{GetNoise}{polys}
        \State pTex, dPdXTex = \Call{DrawPolys}{\textproc{Draw}, polys}
        \State vTex = \Call{FullScreenQuad}{\textproc{Normalize}, pTex, dPdXTex}
        \State\Return{vTex}
    \EndProcedure
    \Procedure{Draw}{position}
        \State \Return{position, \Call{dFdX}{position}}
    \EndProcedure
    \Procedure{Normalize}{fragCoord, pTex, dPosdXTex}
        \State histoSize = 64
        \State weights = float[2,histoSize]
        \State value = 0
        \For{offset in (size, size)}
            \State p = \Call{TexelFetchOffset}{pTex, fragCoord, offset}
            \State foot = \Call{TexelFetchOffset}{dPdXTex, fragCoord, offset}   
            \State level = \Call{log2}{foot}
            \State dLevel = \Call{floor}{level}
            \State id0 = \Call{hash}{dLevel + 0, p}
            \State id1 = \Call{hash}{dLevel + 1, p}
            \State value0 = \Call{Noise}{id, dLevel + 0}
            \State value1 = \Call{Noise}{id, dLevel + 1}
            \State value += \Call{LERP}{value0, value1, weight}
            \State weight = \Call{Fract}{level}
            \State weights[0, id0 \% histoSize] += weight 
            \State weights[1, id1 \% histoSize] += 1 - weight 
        \EndFor
        \State weights /= \Call{Sqr}{size}
        \State value /= \Call{Sqr}{size}
        \State sqrWeights = \Call{Sqr}{weights}
        \State variance = \Call{Sum}{sqrWeights}
        \State \Return{value / \Call{Sqrt}{variance}}
        \EndProcedure
    \end{algorithmic}
\end{algorithm}
We provide a variant of our method implemented in the rasterization shader as \refAlg{Shader} to complement the raytracing-based one from the main paper (Main~Alg.~1).

\mymath{\magic}{\psi}
\mymath{\areaEstimates}{\estimate{\mathbf a}}
\mymath{\trueAreas}{\mathbf a}
\mymath{\expectation}{\mathbb E}
\mymath{\taylorpower}{p}

\mysection{Additional details on the proof of Main~Thm.~2}{Proof}
Here we provide the full derivation to bound the remainder term in Main~Eq.~10.
\begin{lemma} \label{thm:remainder}
$\expectation[R(\areaEstimates)] = O(m^{-2})$.
\end{lemma}
\begin{proof}
Recall that $R(\areaEstimates)$ is the remainder of the second-order Taylor expansion of $\magic(\mathbf{x})$ about $\trueAreas$. By the mean value theorem, the Lagrange form of $R(\areaEstimates)$ is
\begin{equation}
R(\areaEstimates) = \frac{1}{6}\sum_{i,j,k} \frac{\partial^3 \magic}{\partial x_i \partial x_j \partial x_k} \Bigg|_{\mathbf{c}} (\hat{a}_i - a_i)(\hat{a}_j - a_j)(\hat{a}_k - a_k),
\end{equation}
where $\mathbf{c}$ is some point on the segment between $\trueAreas$ and $\areaEstimates$: $\mathbf{c} = \trueAreas + t_c(\areaEstimates - \trueAreas), t_c \in (0, 1)$. Taking the expectation on both sides gives
\begin{equation}
    \expectation[R(\areaEstimates)] =  \frac{1}{6} \expectation \Bigg[\sum_{i,j,k} \frac{\partial^3 \magic}{\partial x_i \partial x_j \partial x_k} \Bigg|_{\mathbf{c}} (\hat{a}_i - a_i)(\hat{a}_j - a_j)(\hat{a}_k - a_k)\Bigg].
\end{equation}
Unlike the second-order partial derivatives in Main~Eq.~10, the third-order partial derivatives here depend on $\areaEstimates$ through $\mathbf{c}$. 
However, we can factor them out as they are bounded due to the smoothness of $\phi(\mathbf{x})$ ($C^{\infty}$ on $(0, \infty)^n$). Formally, let $M_{ijk}$ be the supremum of the absolute values of each partial derivative over an epsilon-open ball around $\mathbf{a}$ that covers $\mathbf{c}$, and let $M$ be the maximum of all $M_{ijk}$:
\begin{equation}
M_{ijk} = \sup_{\mathbf{c} \in \mathcal{B}_{\epsilon}(\trueAreas)} \Bigg| \frac{\partial^3 \magic}{\partial x_i \partial x_j \partial x_k} \Bigg|_{\mathbf{c}} \Bigg| < \infty \; \text{and} \; M = \max_{i,j,k} M_{ijk} < \infty.
\end{equation}
Then, $\expectation[R(\areaEstimates)]$ is bounded by
\begin{equation} \label{eq:third_moment}
\expectation[R(\areaEstimates)] \leq \frac{M}{6} \sum_{i,j,k} \expectation[(\hat{a}_i - a_i)(\hat{a}_j - a_j)(\hat{a}_k - a_k)].
\end{equation}
Recall the definition of $X_{i,s}$ in the main paper. We now further define random variable $Z_{i,s} = X_{i,s} - a_i$ and write
\begin{equation}
    \hat{a}_i - a_i = \frac{1}{m} \sum_{i=1}^m Z_{i,s}.
\end{equation}
The third moment term inside \refEq{third_moment} then becomes
\begin{equation} \label{eq:triple_sum}
    \expectation[(\hat{a}_i - a_i)(\hat{a}_j - a_j)(\hat{a}_k - a_k)] = \frac{1}{m^3} \sum_{s=1}^m \sum_{t=1}^m \sum_{u=1}^m \expectation[Z_{i,s} Z_{j,t} Z_{k,u}].
\end{equation}
Because samples $\omega_1, \dots, \omega_m$ are independent and $\expectation[Z_{i,s}] = 0$, only the diagonal terms ($s = t = u$) in \refEq{triple_sum} are nonzero:
\begin{equation} \label{eq:triple_sum_diagonal}
    \expectation[(\hat{a}_i - a_i)(\hat{a}_j - a_j)(\hat{a}_k - a_k)] = \frac{1}{m^3} \sum_{s=1}^m \expectation[Z_{i,s} Z_{j,s} Z_{k,s}].
\end{equation}
To derive the expression of the inner term, we generalize and utilize the properties of $X_{i,s}$ from Main~Eq.~11. Given a list of sub-pixel domain indices $i_1, \dots, i_r$ and $q_1, \dots, q_r \geq 1$,
\begin{equation} \label{eq:ext_prop}
\expectation[X_{i_1,s}^{q_1}, \dots X_{i_r,s}^{q_r}] =
\begin{cases} 
a_{i_1} & \text{if } i_1 = i_2 = \dots = i_r, \\
0 & \text{otherwise.}
\end{cases}
\end{equation}
\refEq{ext_prop} holds because each $X_{i,s}$ is binary and each sample $\omega_s$ cannot fall in two subdomains simultaneously.
In particular, $\expectation[X_{i,s}^{q}] = a_i$ for any $q \geq 1$.
The inner term of \refEq{triple_sum_diagonal} can now be expressed in cases:
\begin{enumerate}
    \item $i=j=k$:
\begin{align} \label{eq:case1}
    \expectation[(Z_{i,s})^3] &= \expectation[(X_{i,s} - a_i)^3] \nonumber \\
    &= \expectation[ X_{i,s}^3 - 3X_{i,s}^2 a_i + 3 X_{i,s} a_i^2  - a_i^3 ] \nonumber \\
    &= 2a_i^3 -3a_i^2+a_i.
\end{align}
    \item Only two indices are equal. Without loss of generality, let $i=j \neq k$:
\begin{align} \label{eq:case2}
    \expectation[(Z_{i,s})^2 Z_{k,s}] &= \expectation[(X_{i,s} - a_i)^2 (X_{k,s} - a_k)] \nonumber \\
    &= 2 a_i^2 a_k - a_i a_k.
\end{align}    
    \item $i \neq j$ and $j \neq k$:
\begin{align} \label{eq:case3}
    \expectation[Z_{i,s} Z_{j,s} Z_{k,s}] &= \expectation[(X_{i,s} - a_i) (X_{j,s} - a_j) (X_{k,s} - a_k)] \nonumber \\
    &= 2a_i a_j a_k.
\end{align}
\end{enumerate}
Notice that in all three cases, the expression is constant with respect to $m$. Therefore, substituting \refEq{case1} -- \refEq{case3} into the right hand side of \refEq{triple_sum_diagonal} gives a sum of $m$ constant terms divided by $m^3$, which is $O(m^{-2})$. Finally, the proof is completed by substituting \refEq{triple_sum_diagonal} into \refEq{third_moment}.
\end{proof}

\mysection{Spatial correlation}{SpatialCorrelation}
\myfigure{SpatialCorrelation}{Residual spatial correlation: any method that resamples a pixel grid (two pixels U and V shown as big squares) to a non-aligned texel grid (colored areas) is subject to some form of correlation.
Consider the different values at each texel, visualized as colors.
As the texel and pixel grid do not align, there are always some texels that are shared by two pixels, in this case $x_2$, $x_7$, and $x_{10}$.
The resulting pixel value is a combination of colors, which all are, without loss of generality, distinct, besides the shared ones.}

There is one source of minor spatial correlation in our approach and probably also in other approaches that make use of resampling from a high, but finite-resolution grid to a non-aligned pixel grid such as \citet{burgert2025gowiththeflow}.

As shown in \refFig{SpatialCorrelation}, a pixel grid and texels are, without loss of generality, never aligned.
Texels, under general warps, and in particular when projected from 3D as in our method, are not even a regular grid.
When a pixel samples multiple texels as a linear combination given by reconstruction filter weights, it will always, however high the resolution is, touch one or more texels that are also touched by another pixel.
In our illustration this is texel 2, 7 and 10.
Fortunately, this correlation can be quantified and made arbitrarily low with arbitrary texel resolution as we show next:

\subsection{Correlation pixels due to a shared texel}

Let $\mathbf{X} = [X_1, X_2, \dots, X_n]^T$ be the random vector of all texels in the entire random field. 
We define two pixels $U$ and $V$ as distinct linear combinations of $\mathbf{X}$, using the weight vectors $\mathbf{a} = [a_1, a_2, \dots, a_n]^T$ and $\mathbf{b} = [b_1, b_2, \dots, b_n]^T$:
\[
U = \mathbf{a}^T\mathbf{X} = \sum_{i=1}^n a_i X_i
\qquad
\text{and}
\qquad
V = \mathbf{b}^T\mathbf{X} = \sum_{i=1}^n b_i X_i,
\]
\change{where $\sum_i a_i = \sum_i b_i = 1$,} 
as a pixel is a partition of unity of all texels below it.
Most elements in $\mathbf{a}$ and $\mathbf{b}$ are zero as a pixel does not cover most texels.
If it is non-zero, it is almost always non-zero in only either $\mathbf{a}$ or $\mathbf{b}$ as one texel only falls on one pixel.
The exceptions are those texels, that affect both $U$ and $V$. These are the ones that induce a subtle linear Pearson correlation between the pixel variables.

\change{
While the individual elements of $\mathbf{X}$ have unit variance, $U$ and $V$ being their linear combinations need not, and correcting for this is precisely the paper's objective. Let their variance-corrected counterparts be
\begin{equation*}
U' = \frac{U}{\sqrt{\mathrm{Var}(U)}}
\qquad
\text{and}
\qquad
V' = \frac{V}{\sqrt{\mathrm{Var}(V)}}.
\end{equation*}
Our method is about jointly estimating $U$ and $\mathrm{Var}(U)$ (resp. $V$ and $\mathrm{Var}(V)$) via Monte Carlo, but for the sake of this discussion, we assume perfect estimates and no Monte Carlo noise. The Pearson correlation between $U'$ and $V'$ can then be quantified as
}
\change{
\[
R=
\text{Corr}(U', V') = \frac{\text{Cov}(U, V)}{\sqrt{\text{Var}(U)\text{Var}(V)}}.
\]
}
To compute this expression, we review three properties carried from the main paper:
\begin{enumerate}
    \item \textbf{The texel noise has unit variance:} $\text{Var}(X_i) = 1$ for all $i$ as every noise texel simply is unit-random value.
    \item \textbf{The texel noise is uncorrelated:} $\text{Cov}(X_i, X_j) = 0$ for all $i \neq j$ as texels simply are independent of each other.
    \item \change{\textbf{Variance of each pixel:} $\text{Var}(U) = \sum_i a_i^2 = ||\mathbf{a}||^2$ and $\text{Var}(U) = \sum_i b_i^2 = ||\mathbf{b}||^2$. This is carried from Main~Eq.~6.}
\end{enumerate}
From \change{Properties} 1 and 2, the covariance matrix of $\mathbf{X}$ simplifies to the identity matrix ($\mathsf{\Sigma} = \mathsf{I}$). 
By substituting $\mathsf{\Sigma} = \mathsf{I}$ into the joint covariance formula, the numerator simplifies to:
\[
\text{Cov}(U, V) = \mathbf{a}^T \mathsf{\Sigma} \mathbf{b} = \mathbf{a}^T \mathsf{I} \mathbf{b} = \mathbf{a}^T \mathbf{b}.
\]

\change{Together with Property 3, the correlation becomes:
\begin{equation} \label{eq:spatial_correlation}
R = \frac{\mathbf{a}^T \mathbf{b}}{||\mathbf{a}|| \, ||\mathbf{b}||}.
\end{equation}
This means, that the correlation between two of our pixels is the \emph{cosine similarity of their normalized weight vectors}.
}

\subsection{Example}
Let us consider two examples to get an idea about the actual correlation values.

Consider two pixels observing an opaque diffuse world.
In that case both $\mathbf{a}$ and $\mathbf{b}$ are mostly zeros, and they are rarely non-zero at the same index. The correlation comes only from those rare coinciding non-zero components.
Hence, only these coinciding components matter as otherwise one or both factors in a product is zero.

To see how large they are, consider a 1D case in \refFig{Correlation1D}, where for example 10 texels fall onto one pixel.
Only one texel $k$ is shared by both pixels in 1D. 
\change{Let $w$ be the length of this texel that straddles $U$, which can vary between $0$ and $0.1$. 
The spatial correlation can be written precisely according to \refEq{spatial_correlation},
with $||\mathbf{a}|| = ||\mathbf{b}||$ because the texel at each pixel's other boundary is split in the same way into $w$ and $0.1 - w$:
\begin{equation*}
    R = \frac{w(0.1-w)}{(0.1 - w)^2 +  9 \cdot 0.1^2 + w^2}, \quad w \in (0, 0.1).
\end{equation*}
which peaks at approximately $0.026$ when $w = 0.05$, which is once texel $k$ covers both pixels in equal share.
\myfigure{Correlation1D}{An example of 1D spatial correlation.}
}

\myfigure{CorrelationBoundary}{A texel grid of doubling resolution resp.\ magnification (columns), falling onto a single pixel (yellow square).
The first row shows the configuration where texels that would induce correlation with neighbouring pixels are marked orange.
The second row shows the respective area of unique vs.\ correlated texels.
We see that with typical resolution the correlated pixel areas vanishes and \change{so does the correlation.}
}

In the general 2D/3D case for general mappings from texels to pixels, as in \refFig{SpatialCorrelation}, it is difficult to quantify how texels exactly project onto both of two pixels.
The tendency however is clear: when the magnification grows, the number of unique texels inside grows much quicker than the number of correlated texels on the edge.
\change{As shown in \refFig{CorrelationBoundary}, with $n \times n$ texels covered per pixel, the shared edge strip has approximately $n$ texels. We can roughly estimate the spatial correlation by assuming each texel weights $1/n^2$, and each shared texel is cut exactly in halves. Again using \refEq{spatial_correlation},
\begin{equation*}
    R \approx \frac{n \, (\frac{0.5}{n^2})^2}{ n^2 \, (\frac{1}{n^2})^2} = \frac{1}{4n} = O(n^{-1}).
\end{equation*}
In other words, the residual correlation is small and vanishes linearly as the texel resolution grows.
}

\subsection{Discussion}
This effect does not go away by other reconstruction filter.
It rather is becoming worse, as in this condition, simply, a single texel maps to even more pixels.
Notably, the inverse is not true: it is not sufficient to use non-overlapping reconstruction filters to guarantee no correlation.
It just appears to minimize these.
Our method, as any path tracer, is able to use any reconstruction filter, and while we have only experimented with box filters, exploring the resulting reconstruction-decorrelation continuum seems relevant to address in future work.

Note, that there are other forms of spatial correlation, which are \emph{not} due to reconstruction filtering, which might be desirable.
Those are correlations induced by light transport, \eg a pixel with a world space position that is reflecting another world space position will have, and should have, the same random variable and hence be correlated.

Depending on which integrator is used, additional correlation might be introduced, \eg by QMC, which again could be defeated by scrambling, if necessary.

We thank the anonymous reviewers for pointing out this observation of unwanted spatial correlation.
While formally correct, as explained above, it is likely so small that we did not observe any resulting issues.

\mysection{Additional results of non-photorealistic rendering}{NPR}
In \autoref{fig:new_npr_results} and the supplemental video, we provide additional \ac{NPR} results with two styles. The first style mimics point stippling by applying binary thresholding to the noise based on tone intensity. Color is achieved by independent thresholding of each channel of the CMYK color space with different noise seeds. The second style mimics hatching by blending anisotropically stretched noise layers\footnote{Technically, our noise is no longer unit-variance Gaussian after the stretching and blending. Nonetheless, the qualitative advantage of our noise remains evident.}. We compare the shading results using our multi-level, variance-corrected noise versus regular 3D noise, both defined in world space. Our approach ensures consistent pattern size and density everywhere regardless of object position and camera zoom. Orthogonal to our implementation, it is possible to further support object animation using UV-space noise~\citep{praun2001hatch} and to improve sample uniformity using blue noise~\citep{wei2010bluenoise}. We leave these potential extensions as future work.

\begin{figure*}
    \centering
    
    \setlength{\tabcolsep}{1pt}
    \renewcommand{\arraystretch}{0.5}
    \begin{tabular}{ccccc}
        & \small{\textsf{Baseline}} & \small{\textsf{Ours}} & \small{\textsf{Baseline}} & \small{\textsf{Ours}}\\
        \\
        \cline{2-5}
        \\
        & \multicolumn{2}{c}{\small{\textsc{Still Life}}} & \multicolumn{2}{c}{\small{\textsc{Winter}}}\\
        
        \multirow{2}{*}[1em]{\rotatebox{90}{\small{\textsf{Stippling}}}} &
        \includegraphics[width=0.24\linewidth]{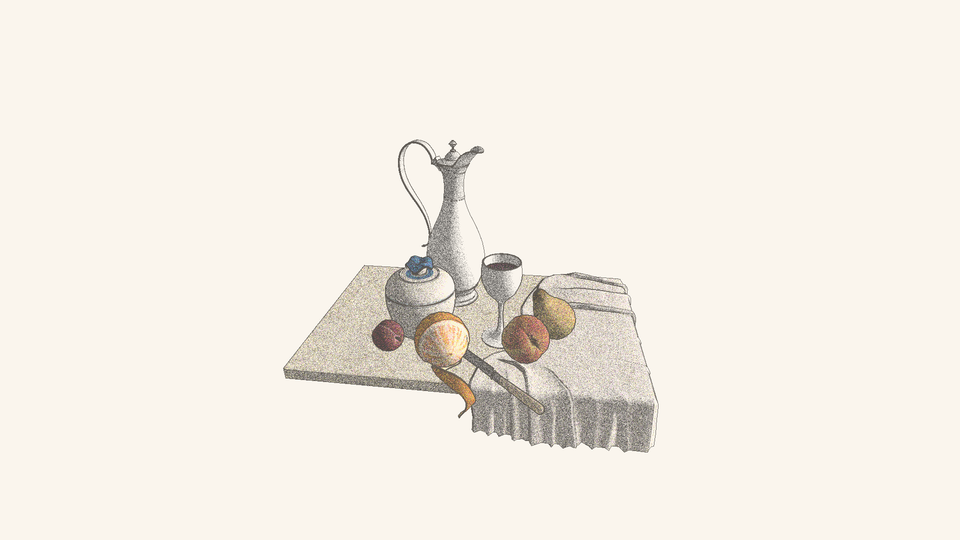} &        \includegraphics[width=0.24\linewidth]{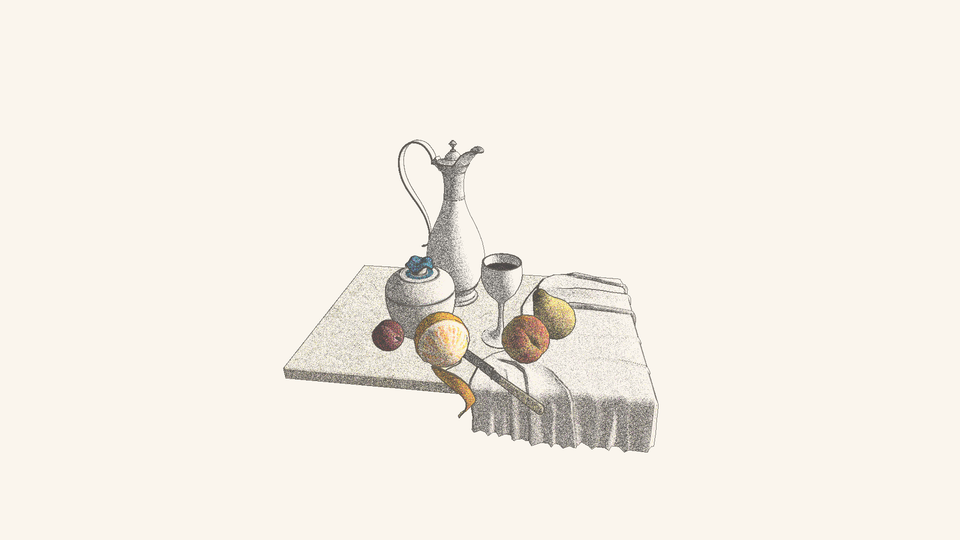} &        \includegraphics[width=0.24\linewidth]{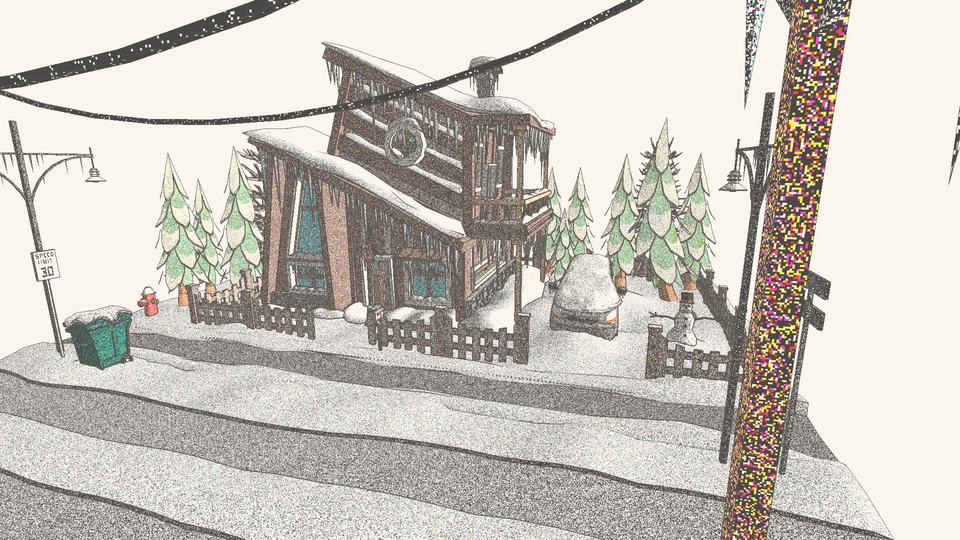} &        \includegraphics[width=0.24\linewidth]{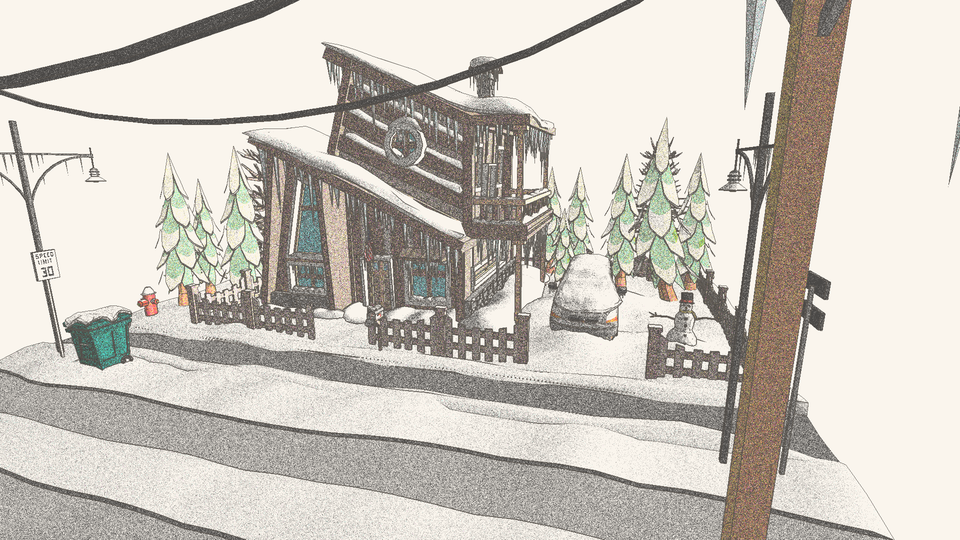} \\
        &
        \includegraphics[width=0.24\linewidth]{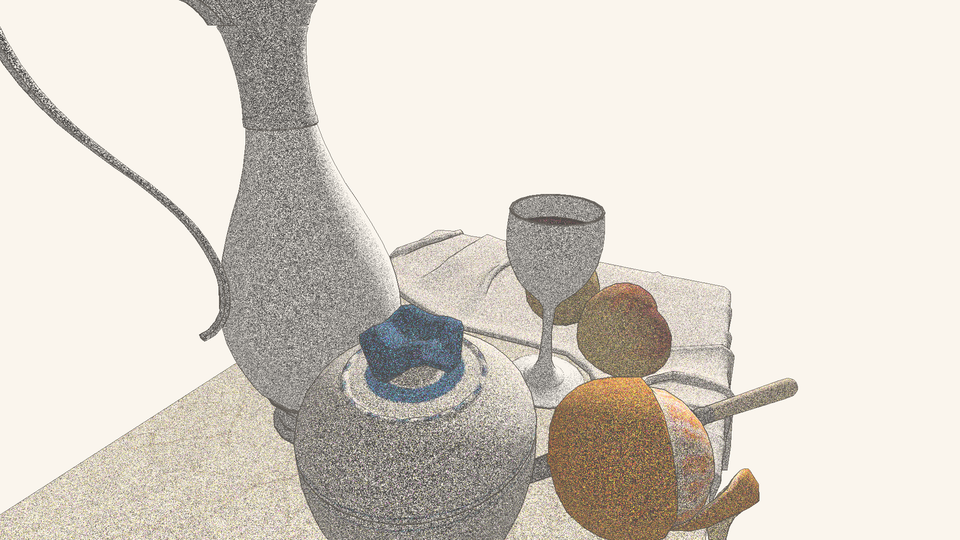} &        \includegraphics[width=0.24\linewidth]{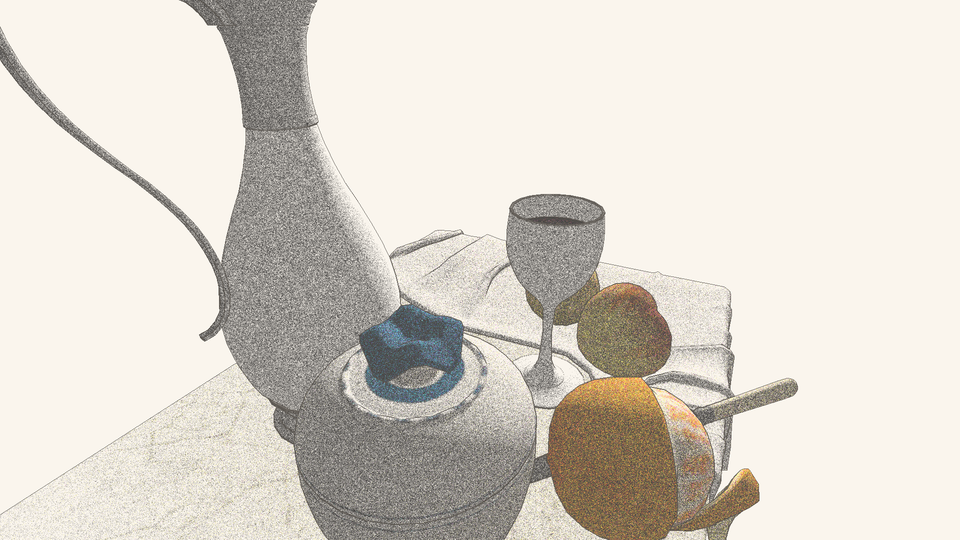} &        \includegraphics[width=0.24\linewidth]{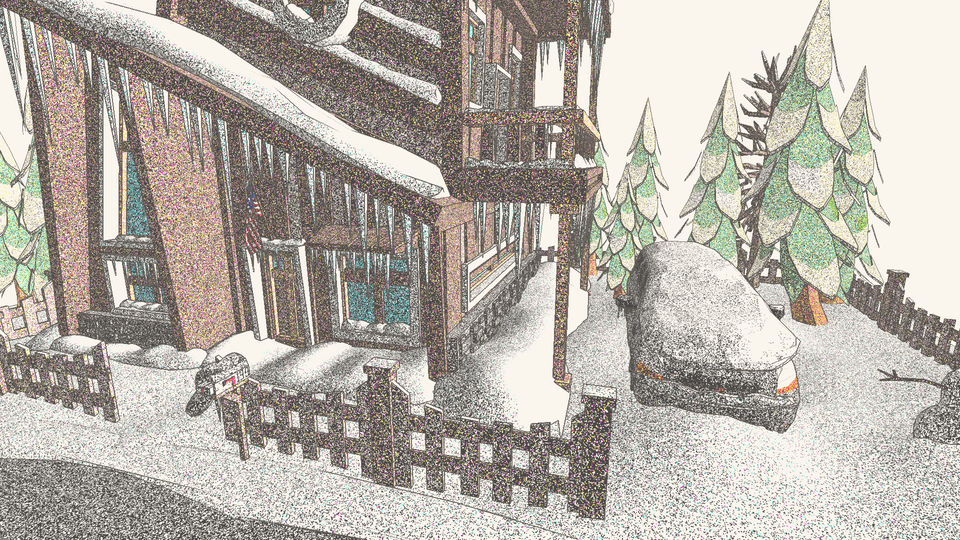} &        \includegraphics[width=0.24\linewidth]{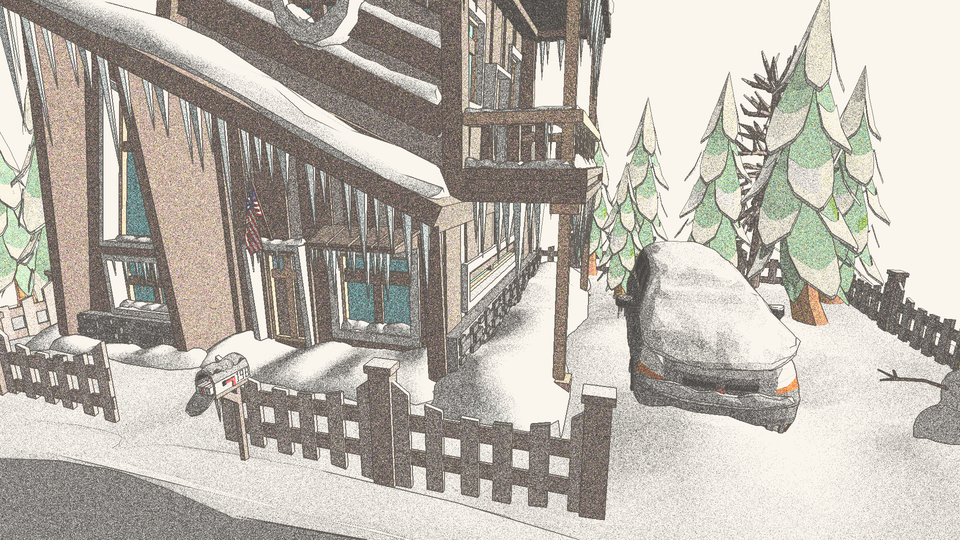} \\
        
        \multirow{2}{*}[1em]{\rotatebox{90}{\small{\textsf{Hatching}}}} & 
        \includegraphics[width=0.24\linewidth]{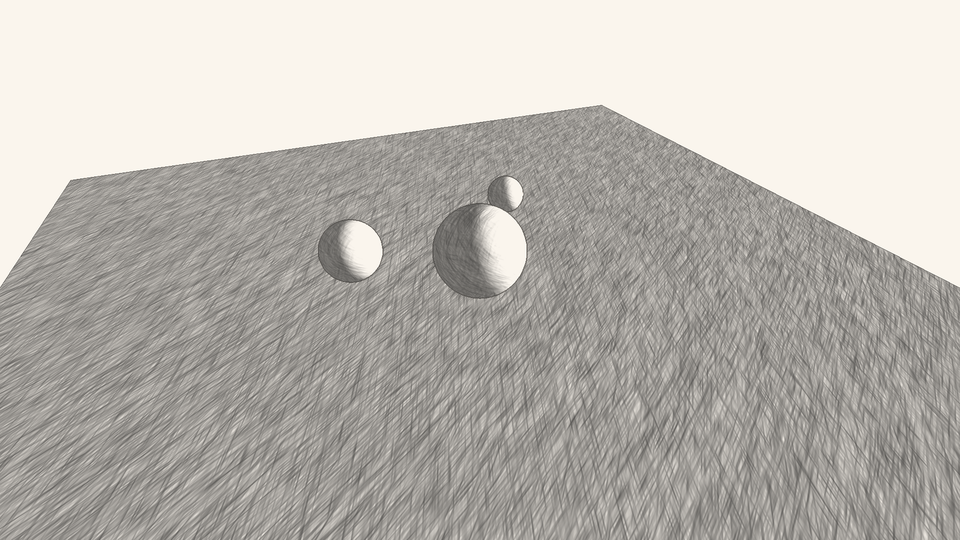} &        \includegraphics[width=0.24\linewidth]{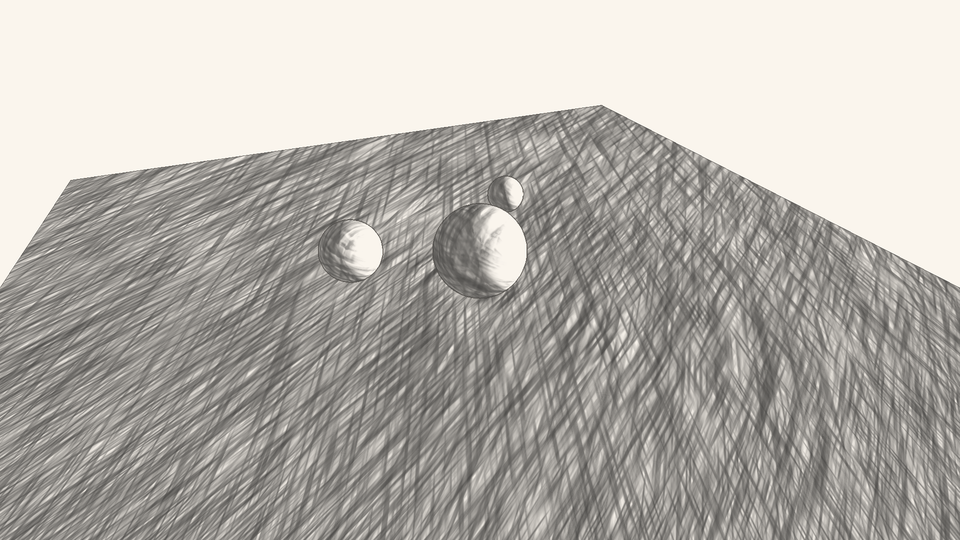} &        \includegraphics[width=0.24\linewidth]{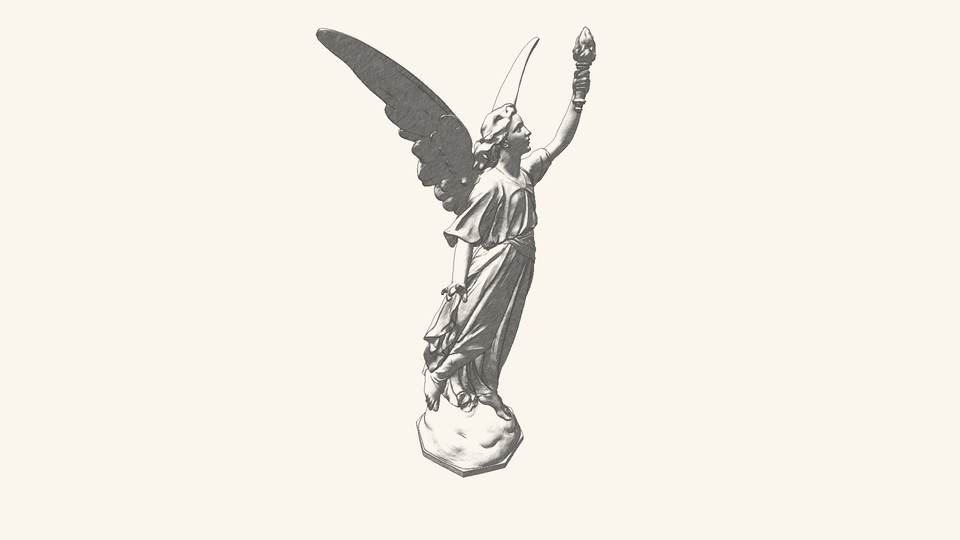} &        \includegraphics[width=0.24\linewidth]{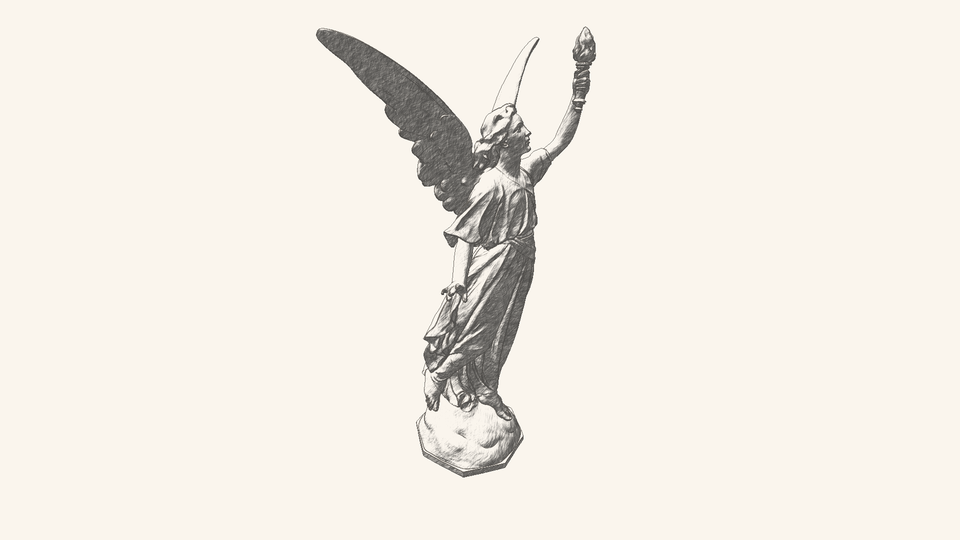} \\
        & 
        \includegraphics[width=0.24\linewidth]{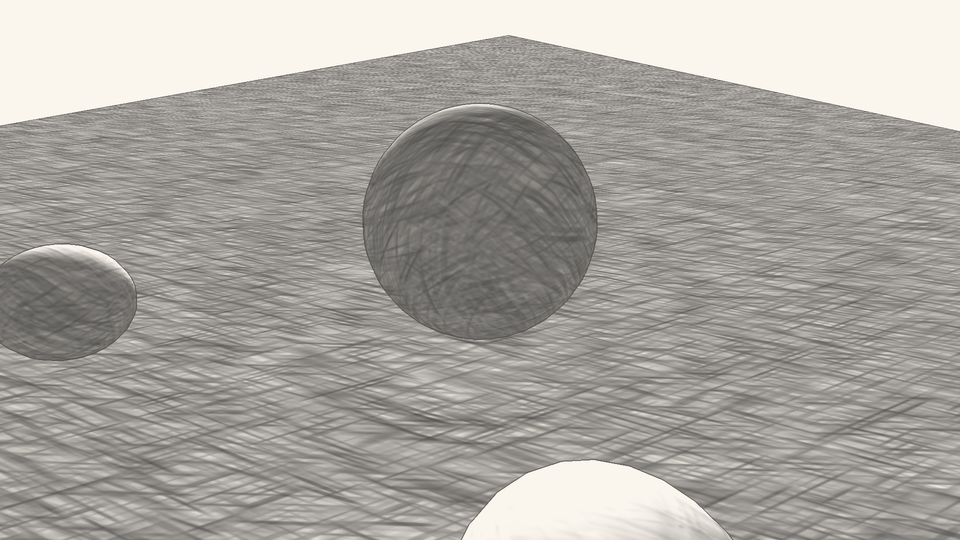} &        \includegraphics[width=0.24\linewidth]{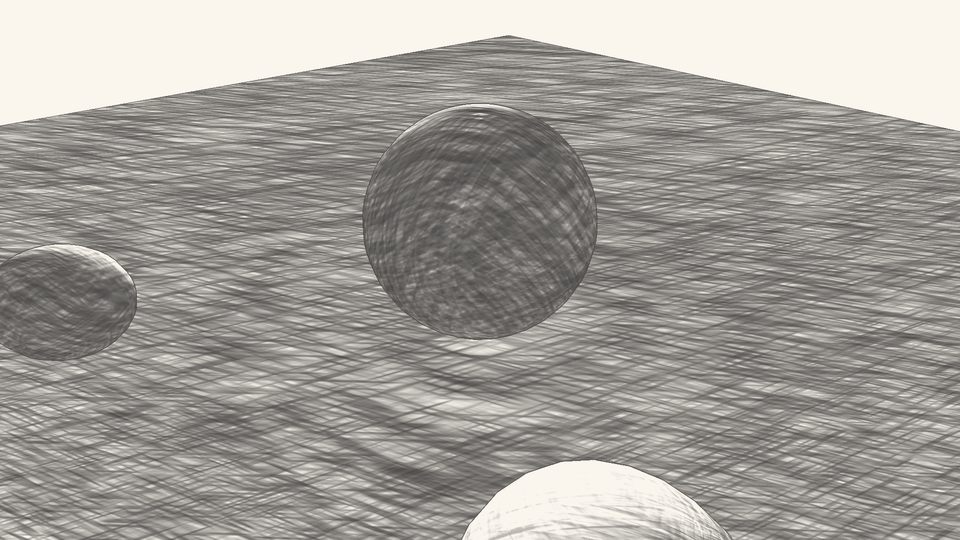} &        \includegraphics[width=0.24\linewidth]{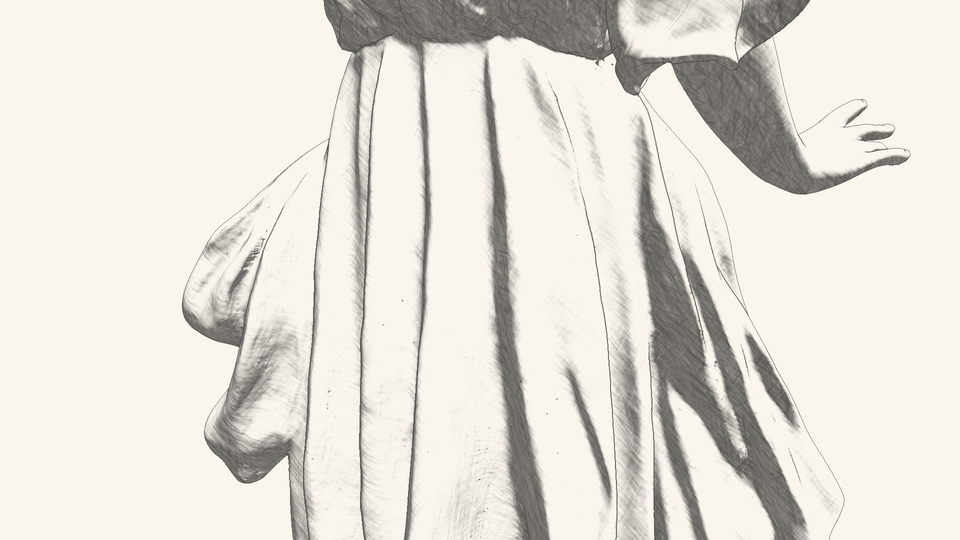} &        \includegraphics[width=0.24\linewidth]{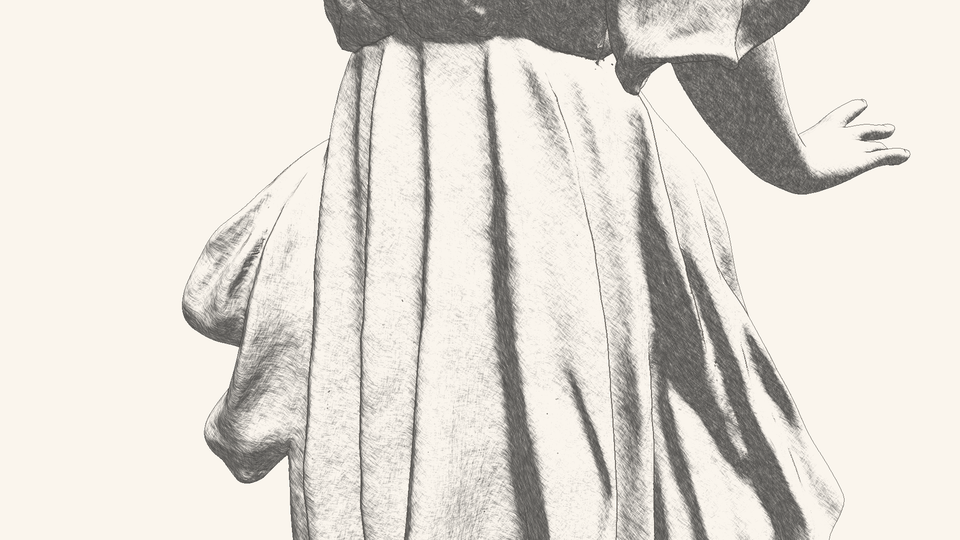} \\
        & \multicolumn{2}{c}{\small{\textsc{Playground}}} & \multicolumn{2}{c}{\small{\textsc{Lucy}}}
        
    \end{tabular}
    
    \caption{Comparisons of non-photorealistic stippling and hatching renders based on world-space noise (``Baseline'') versus our multi-level, variance-corrected noise (``Ours''). For each scene, two views are shown to highlight the different noise behaviors at different zoom levels. Notice how ours maintains a consistent scale of patterns. Readers are encouraged to zoom in for a closer inspection.}
    \label{fig:new_npr_results}
\end{figure*}

\mysection{2D vs 3D flow}{FlowDetails}
There is a subtle difference between 2D flow and 3D flow which fundamentally affects how our method and previous methods can take them into account in video restoration.
Flow can be either absolute or relative, forward or backward, and -- for the simplicity of naming -- 2D or 3D.

None of the following are assumptions of our method at all.
It is just that our method can take in more advanced input than previous work and it is different in a subtle way.

\paragraph{Absolute vs relative}
Absolute can be converted to relative and back by simple additions.
All previous as well as our method can work with them equally well as they can be converted up to numerics.

\paragraph{Forward vs backward}
Forward and backward flow can also be consumed by all methods, including ours.
It is harder to convert them, as it ideally requires a reverse mapping of a function that is supposed to be defined and evaluated smoothly in space but not practical in reality due to finite grids.

\paragraph{2D vs 3D}
The substantial difference is what the flow values actually mean.
In normal 2D flow, we have an image coordinate we can fetch from and do so with all sophistication proposed in previous work.
Our method as well can take in 2D flow.

3D flow however, is different.
Let us assume forward, absolute 3D flow as produced by \citet{karaev24cotracker3}: Now, every pixel has a unique identifier.
This is ``3D'' in so far, as it will not be affected by occlusion and hence gives every point a unique and stable identifier.
This identifier for our work does not even need to be the coordinate, it does not need to be metric even, it just needs to be unique, because this is all that our method needs to recognize correlation in space and time via hashing.
It turns out, that the identifier mostly is the image space coordinate, and the time code of first appearance.
Like this, not every 3D point always gets the same unique code, but every 3D point the flow algorithm managed to recognize does.
Taking in this flow is not possible with previous methods: if a point was occluded (or otherwise dropped) in the frame before, it is lost forever.
Previous methods simply have no way of taking in time (or any other disambiguating information), all they consume is coordinates.

\paragraph{Implementation}
In practice, we need absolute backward 3D flow, but modern flow estimation models come as point tracking.
We are not aware of a method to perform 3D flow estimation so we devised our own.
We provide a supplemental website viewer to visualize the optical flows by this new method and subsequently the noises generated based on the flows.
The new method works as follows:
for every frame the 3D flow algorithm provides a sparse set of points that in this frame emerged and for each element of that set the varying-length dense trajectory as a list of future 2d positions.
We parse the sequence start to end and for every frame we iterate over every element of the set of new emergent features.
This element is a trajectory and we splat the pair of initial 2D image coordinate and time code to the future positions in future frames.
Note, that this does not create holes, as something that is never covered by a path in the future, will spawn its own trajectory (\ie it is a disocclusion).
We splat the $x,y,t$ coordinates additively and keep a count of how many paths splatted to a pixel, to divide the coordinate sum by this count in the end.
This works well except at edges of disocclusion, because in their vicinity, both foreground and background trajectories might contribute, leading to an unwanted averaging of foreground and background motion which is correct for neither.
Instead, we would be more interested to break the tie by consistently choosing a mode of the distribution of contributors.
This can be achieved by replacing the additive blending followed by normalization by a full-fledged mode-seek: instead of adding up pixels, we keep a set of all coordinates affecting a pixel.
Once splatting is finished, this set is clustered, and the dominant mode is used as the coordinate.
Like this, pixels on the edges of moving objects either get mapped to the foreground or background, but not to an undesirable mix.

\paragraph{3D Warping Error}
We measure 3D warping errors by inheriting the main idea of warping error \cite{laiLearningBlindVideo2018} but replacing its flow estimated between two consecutive frames with the 3D optical flow.
It is a reference-based metric where the estimation of 3D optical flow requires the original reference video, and then applied to the video to be tested to compute the MSE between every pixel and its birth pixel.
Specifically, we compute as below:
\begin{equation}
    \text{MSE}(V(x) - V(F(x))),
\end{equation}
where $V$ is the video, $x$ is a valid pixel of $V$, $F$ is the 3D optical flow which records for each pixel its birth coordinate of space and time.
This measures the differences of each pixel from what it was at birth to capture the global temporal consistency, \ie space-time memory, for our needs to demonstrate the benefits of using 3D optical flow to generate our noises.
Practically, we conduct the same protocol of computing warping errors that measures MSE in RGB values of $[0, 255]$, reduces across channels, and averages for all pixels of one video.

\paragraph{Optical Flow Support}
We further summarize in \refTab{optical_flow_modality} the different optical flow modalities that our method and the baseline use to compute metrics, as well as those they can support in principle.

\begin{table}[htb]
    \centering
    \caption{Optical Flow Modality comparison.}
    \label{tab:optical_flow_modality}
    \begin{tabular}{@{}lllcc@{}}
        \toprule
        \multicolumn{3}{c}{{Modality of Optical Flows}} & {Used by} & {Supported by} \\ \cmidrule(r){1-3}
        {Type} & {Direction} & {Dim.} & & \\
        \midrule
        \multirow{2}{*}{Absolute} & Forward & 2D & ---  & Ours \\
                                  & Backward & 2D & ---  & Ours \\
        \addlinespace
        \multirow{2}{*}{Relative} & Forward & 2D & GWTF & GWTF \\
                                  & Backward & 2D & GWTF & GWTF \\
        \addlinespace
        \multirow{2}{*}{Absolute} & Forward & 3D & ---  & Ours \\
                                  & Backward & 3D & Ours & Ours \\
        \bottomrule
    \end{tabular}
\end{table}

\bibliographystyle{ACM-Reference-Format}
\bibliography{paper}